\newtheorem{problem}{Problem}
\newtheorem{definition}{Definition}
\newtheorem{lemma}{Lemma}
\newcommand{\revision}[1]{{\leavevmode\color{black}#1}}
\algnewcommand\INPUT{\item[\textbf{Input:}]}%
\algnewcommand\OUTPUT{\item[\textbf{Output:}]}%
\begin{document}

\title{The Hourglass Effect in Hierarchical Dependency Networks}
\author{
Kaeser M Sabrin, Constantine Dovrolis \\
School of Computer Science \\
Georgia Institute of Technology\\
\textit{kmsabrin@gatech.edu} and \textit{constantine@gatech.edu}\\
}

\maketitle

\begin{abstract}
Many hierarchically modular systems are structured in a way that resembles an hourglass. 
This ``hourglass effect'' means that the system generates many outputs from many inputs through 
a relatively small number of intermediate modules that are critical for the operation of the entire system,
referred to as the waist of the hourglass.
We investigate the hourglass effect in general, not necessarily layered, hierarchical dependency networks.
Our analysis focuses on the number of source-to-target dependency paths that traverse each vertex,
and it identifies the core of a dependency network as the smallest set of vertices that collectively 
cover almost all dependency paths.
We then examine if a given network exhibits the hourglass property or not, 
comparing its core size with a ``flat'' (i.e., non-hierarchical) network that preserves
the source dependencies of each target in the original network. 
As a possible explanation for the hourglass effect,
we propose the {\em Reuse Preference (RP)} model that captures the bias of new modules 
to reuse intermediate modules of similar complexity instead of connecting directly to sources or low complexity
modules.  
We have applied the proposed framework in a diverse set of dependency networks from technological, 
natural and information systems, showing that all these networks exhibit the general hourglass property
but to a varying degree and with different waist characteristics. 

\footnote{\textcolor{blue}{This is a revised version of the paper, \enquote{The hourglass effect in hierarchical dependency networks}, published in the journal of \textit{Network Science} 5.4 (2017): 490-528. First, a typo has been corrected in the network of Figure~\ref{fig:h-score}. Second, the model of \hyperref[sec:model]{Section 6} has been revised to address a corner case that could occur for large values of $\alpha$. The differences with respect to the published paper are  highlighted  in  blue in that Section. This change also required to update Figures~\ref{fig:synthetic-alpha}, \ref{fig:model-alpha-effect-1}, \ref{fig:model-alpha-effect-2}, and \ref{fig:tau-effect-model} (with no qualitative differences in the results). We are grateful to Ankit Srivastava for pointing out this corner case.}}
\end{abstract}

\noindent
{\em Keywords:} Modularity, Hierarchy, Evolvability, Robustness, Complexity, Centrality, Core-Periphery
Networks, Hourglass Networks, Bow-Tie Networks, Dependency Networks.


\section{Introduction}

Complex systems in the natural, technological and information worlds are often hierarchically modular
\cite{meunier2010modular,parnas1984modular,ravasz2002hierarchical,schilling2000toward}.  
A modular system consists of smaller sub-systems (modules) that, at least in the ideal case, can 
function independently of whether or how they are connected to other modules: 
each module receives inputs from the environment or from other modules to perform a certain function  
\cite{baldwin2000design,callebaut2005modularity,wagner2007road}.
Modular systems are often also hierarchical,
meaning that simpler modules are embedded in, or reused by, modules of higher 
complexity \cite{ravasz2003hierarchical,sales2007extracting,simon1991architecture,yu2006genomic}.
In the technological world, modularity and hierarchy are often viewed as essential principles
that provide benefits in terms of design effort (compared to ``flat'' or ``monolithic'' designs in which 
the entire system is a single module), 
development cost (design a module once, reuse it many times), 
and agility (upgrade, modify or replace modules without affecting the entire system)
\cite{huang1998modularity,fortuna2011evolution,myers2003software}.
In the natural world, the benefits of modularity and hierarchy are often viewed in terms of 
evolvability  (the ability to adapt and develop novel features can be accomplished with
minor modifications in how existing modules are interconnected) 
\cite{kashtan2005spontaneous,kashtan2007varying,lorenz2011emergence}
and robustness (the ability to maintain a certain function even when there are internal or external 
perturbations can be accomplished using available modules in different ways)
\cite{kirsten2011evolution,kitano2004biological,stelling2004robustness}.
In information sciences, hierarchical modularity can improve the stability, quality and speed of
organizational search tasks (such as product or strategy development) \cite{mihm2010hierarchical,valverde2007self}. 
Additionally, it has been shown that both modularity and hierarchy can emerge naturally 
as long as there is an underlying cost for the connections between different system units
\cite{clune2013evolutionary,mengistu2015evolutionary}. 

It has been observed across several disciplines that hierarchically modular 
systems are often structured in a way that resembles a bow-tie or hourglass 
(depending on whether that structure is viewed horizontally or vertically).
Informally, this means that the system generates many outputs from many inputs through a relatively small 
number of intermediate modules, referred to as the ``knot'' of the bow-tie or the ``waist'' of the 
hourglass.\footnote{The two terms, 
bow-tie and hourglass, have not been always viewed as synonymous in the network science literature. In particular, 
the term bow-tie has been applied even to networks for which the knot includes a large fraction of the
network's vertices. We discuss the differences between the two terms in Section~\ref{sec:rlwork}.  }  
This ``hourglass effect'' has been observed 
in embryogenesis \cite{casci2011development,quint2012transcriptomic},
in metabolism \cite{ma2003connectivity,tanaka2005highly,zhao2006hierarchical},
in immunology \cite{beutler2004inferences,oda2006comprehensive},
in signaling networks \cite{supper2009bowtiebuilder},
in vision and cognition \cite{quiroga2005invariant,riesenhuber1999hierarchical},
in deep neural networks \cite{hinton2006reducing},
in computer networking \cite{akhshabi2011evolution},
in manufacturing \cite{swaminathan1998modeling}, 
as well as in the context of general core-periphery 
complex networks \cite{csermely2013structure,holme2005core}.

\begin{figure}
	\centering
	\subfigure[Pyramid: few targets depend on many sources (or the opposite)]
    {
        \includegraphics[width=0.2975\linewidth]{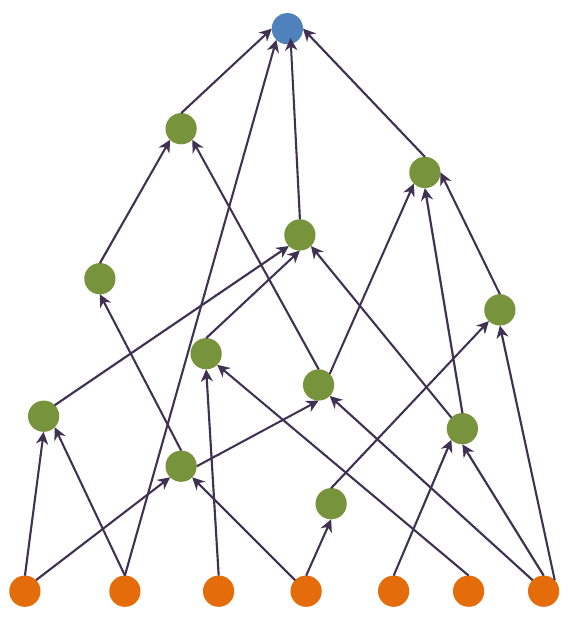}
    }
	\hspace{10mm}
    \subfigure[Direct: targets often depend directly on sources, few intermediate vertices]
    {
        \includegraphics[width=0.275\linewidth]{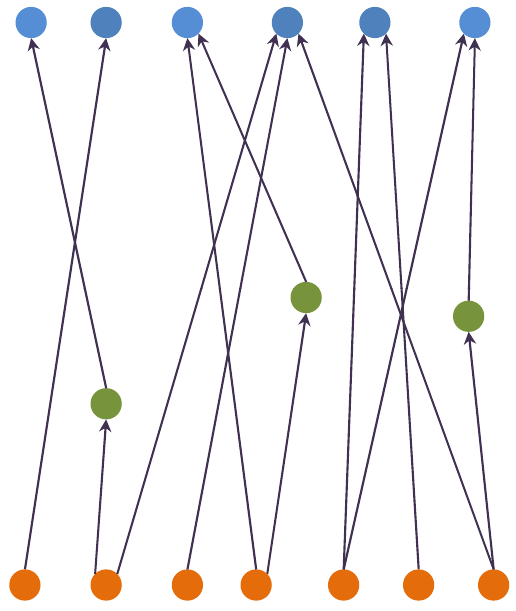}
    }
	\\
	\vspace{1mm}
	\subfigure[Decoupled: little reuse of common intermediate vertices across targets]
    {
        \includegraphics[width=0.275\linewidth]{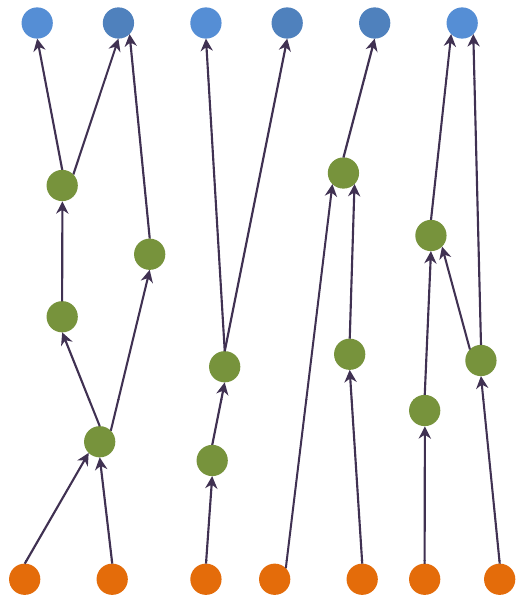}
    }
	\hspace{10mm}
	\subfigure[Hourglass: almost all source-target dependencies traverse a small number of intermediate vertices]
    {
        \includegraphics[width=0.275\linewidth]{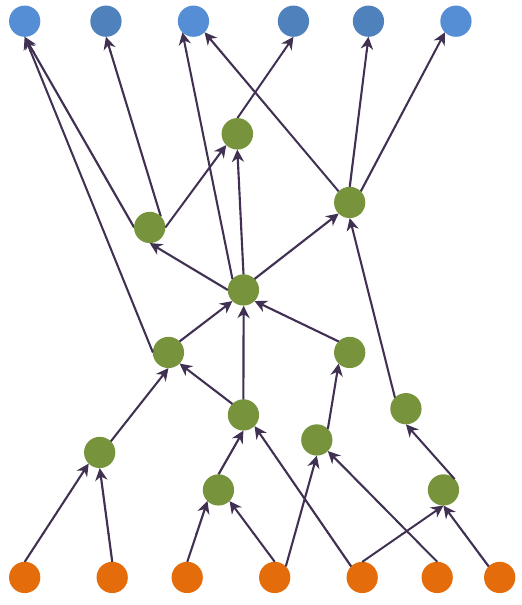}
    }
	\caption{Four toy examples of dependency networks with qualitatively different structure. The 
blue vertices are targets, the green are intermediates, and the orange are sources.}
  \label{fig:dependency-dag-shapes}
\end{figure}

The few intermediate modules in the hourglass waist 
are critical for the operation of the entire system,  and so they are also more conserved during the
evolution of the system compared to modules that are closer to inputs or outputs 
\cite{akhshabi2011evolution,csete2004bow,domazet2010phylogenetically}.
These observations have emerged in a wide range of natural, technological and information disciplines, 
and so it is interesting to investigate whether the so-called {\em hourglass effect} has 
deeper and more general roots that are largely domain-independent.

In this paper, we present a quantitative framework for the investigation of the hourglass effect based
on network analysis.  
First, the organization of a hierarchically modular system is transformed into a {\em dependency network},
i.e., a Directed Acyclic Graph (DAG) in which vertices represent either individual modules or Strongly 
Connected Components (SCCs) of interdependent modules. 
An edge from vertex $u$ to vertex $v$ in a dependency network 
means that module $v$ depends, in a domain-specific manner, on module $u$. 
The input vertices of the dependency network are referred to as {\em Sources} and the outputs as {\em Targets}.
For example, four dependency networks with different qualitative structures
are shown in Figure~\ref{fig:dependency-dag-shapes}.

The importance of each vertex is quantified with a {\em path centrality} metric, defined as the number  
of source-to-target dependency paths that traverse that vertex.  
Based on that metric,
we propose an algorithm to identify the {\em core} of the dependency network, i.e., the smallest set 
of vertices that collectively cover almost all (say 80-90\%) of all source-to-target dependency paths.
After computing the core, we can then 
evaluate if the given network exhibits the hourglass property or not by
comparing its core size with a ``flat'' (i.e., non-hierarchical) network that preserves
the source dependencies of each target.
We also present a {\em Reuse Preference (RP)} model for the formation of a dependency network,
capturing the bias of new modules
to reuse intermediate modules of similar complexity instead of connecting directly to sources or low complexity
modules.

We have applied this analysis framework in a diverse set of dependency networks from technological, 
natural and information systems: 
the call-graphs of two software systems, the metabolic networks of two species, and 
the citation networks of US Supreme Court cases for two legal matters (legality of abortion, 
and pension disputes).  
We show that these networks exhibit the hourglass property but to a varying degree.  
Further we quantify the location of the waist, relative to sources and targets, and the fraction of 
vertices in ``tendril'' paths that bypass the waist. 
The identified vertices at the waist of each network correspond to well-known important
modules in the corresponding systems. 

Finally, we discuss the connections between the hourglass effect and related concepts
such as the core-periphery structure of many complex networks, the presence of network bottlenecks, and the 
evolvability and robustness of systems that are hierarchically modular.  
Together with its theoretical significance, the hourglass effect may also have important practical value,
especially in the design of technological systems that operate in uncertain or evolving environments.

\revision{
In summary, the main contributions of this study are:\\  
1. To show how to transform a directed hierarchical network into a dependency network,
and to introduce path centrality as an appropriate metric for the analysis of dependency networks.\\ 
2. To formulate the core identification problem as 
finding the smallest set of vertices that are traversed by a given fraction $\tau$
of all source-target paths.\\
3. To show how to quantify whether a dependency network exhibits the hourglass effect.\\
4. To propose a probabilistic ``Reuse-Preference'' model of dependency network formation,
which illustrates the conditions under which a dependency network exhibits the hourglass effect.\\ 
5. To apply this analysis and modeling framework on several dependency networks from different
disciplines, showing that they all exhibit the hourglass effect but to a varying extent and with
different waist characteristics.\\
6. To discuss the significance of the hourglass effect in both technology and nature in terms
of network bottlenecks, cost, evolvability and robustness.
}

\section{Dependency networks} \label{sec:dependencies}
Suppose that we are given a directed network ${\bf G_0}$ that represents a hierarchically modular system. 
Each vertex of ${\bf G_0}$ corresponds to a system module.
An edge from vertex $u$ to vertex $v$ means that module $v$ {\em depends on} module $u$.
The precise meaning of this dependency relation is domain-specific.
In a software system, for instance, modules may represent C functions
and edges function calls (function $v$ calls $u$).
In a citation network, the modules may represent research papers or patents
and edges some form of information transfer ($v$ cites $u$).
In a mechanical or chemical process, the modules may represent different devices or materials
and the edges may represent that the construction (or composition)
of a device (material) $v$ requires $u$ as input.
Such hierarchical networks are ubiquitous across biology (e.g., food webs),
technology (e.g., communication protocol stacks), organizations (e.g., reporting
hierarchies), and information systems or social networks (e.g., meme propagation).

In general, the network ${\bf G_0}$ may include cyclic relations (referred to as ``feedback loops'', 
``recursive calls'', etc, depending on the context) between two or more vertices. 
Each set of such interdependent modules can be identified as a Strongly Connected Component
(an SCC is a set of vertices so that every vertex of that set can reach any other vertex of that set).
In other words, the modules of an SCC do not have any hierarchical ordering between them; they are all
mutually interdependent.
To construct an acyclic hierarchical network, 
we first compute all SCCs of ${\bf G_0}$;
this can be done in linear time using Tarjan's algorithm \cite{tarjan1972depth}.
Then, we replace every SCC of ${\bf G_0}$ with a single {\em super-vertex} that corresponds to the
set of vertices in that SCC. Any incoming edge to a vertex of an SCC from a vertex
outside that SCC becomes an incoming edge to the corresponding super-vertex;
similarly, we construct the outgoing edges of each super-vertex from the outgoing edges of the corresponding SCC. 
The replacement of SCCs with super-vertices transforms the original network ${\bf G_0}$ 
into a Directed and Acyclic Graph ${\bf G}$.
We refer to {\bf G} as the {\em dependency network} that corresponds to the original network ${\bf G_0}$.  

In the rest of the paper, the analysis will be focusing on dependency networks, 
and the notation will be as follows (Table~\ref{tab:list-of-symbols} in the Appendix lists all our notation).  
The {\em dependency network} ${\bf G}$ has 
a set ${\bf V}$ of vertices and a set ${\bf E}$ of directed edges.
The number of vertices and edges is denoted by $V$ and $E$, respectively.\footnote{We denote
the cardinality of a set ${\bf X}$ with $X$.}
The in-degree of $v$ is denoted by $d_{in}(v)$ and the set of 
vertices that point to $v$ is denoted by $I(v)$ ({\em inputs} of $v$).
Similarly, the out-degree of $v$ is denoted by $d_{out}(v)$ and the set of 
vertices that $v$ points to is denoted by $O(v)$ ({\em outputs} of $v$).
The {\em ancestors} of $v$ is the set of vertices that can reach $v$, 
while the {\em descendants} of $v$ is the set of vertices that $v$ can reach. 

The set ${\bf S}$ of vertices with zero in-degree are referred to as {\em Sources},
while the set ${\bf T}$ vertices with zero out-degree are referred to as {\em Targets}. 
The set ${\bf M}$ of remaining vertices represent 
{\em Intermediate} modules. We have that ${\bf V}={\bf S} \cup {\bf T} \cup {\bf M}$.
When plotting dependency networks, we follow the convention that sources 
appear at the bottom and targets at the top, and so edges have an upward direction. 

A path $p(s,t)$ from a source $s$ to a target $t$ is referred to as a {\em source-target path}, 
or simply {\em ST-path}.   
Focusing on a target $t$, the set of all ST-paths that terminate at $t$ 
represent the different ``dependency chains'' of sources and intermediates that 
are involved in the formation of $t$.  
We focus on all ST-paths that terminate at $t$ 
instead of all source and intermediate vertices that $t$ depends on.
This distinction is important because a source or intermediate vertex $v$ that participates
in several ST-paths that terminate at $t$ is more important for $t$ 
than a vertex $u$ that participates in fewer such ST-paths. 
For instance, in the context of a citation network the set of ST-paths that terminate
at $t$ represents all distinct ways in which the information contained in those
source references has been transformed and propagated by intermediate references to
finally produce $t$.  
 
To quantify the topological importance of a vertex in a dependency network we rely on the following metric:
\begin{definition}[Path Centrality]
The path centrality $P(v)$ of a vertex $v$ is the number of ST-paths that traverse $v$.
\end{definition}
This metric has been also referred to as the {\em stress} of a vertex \cite{ishakian2012framework}. 
Fig.\ref{fig:path-centrality-complexity-generality}-a illustrates the path centrality of each vertex
in a small dependency network.

$P(v)$ can be computed in $O(E)$ time, 
due to the acyclic nature of  dependency networks.
Suppose that $P_S(v)$ is the number of paths from any source to $v$,
while $P_T(v)$ is the number of paths from $v$ to any target.
$P_S(v)$ can be computed in a bottom-up manner:
$P_S(v)=1$ for all sources and $P_S(v)=\sum_{u \in I(v)} P_S(u)$ for any $v$ that is not a source.
Similarly, $P_T(v)$ can be computed in a top-down manner: 
$P_T(v)=1$ for all targets and $P_T(v)=\sum_{u \in O(v)} P_T(u)$ for any $v$ that is not a target.
It is easy to see that the path centrality of $v$ is simply the product of $P_S(v)$ and $P_T(v)$,
\begin{equation} \label{eq:path-centrality}
P(v) = P_S(v) \times P_T(v)
\end{equation}

The path centrality metric can be also interpreted as follows.  
The number of paths $P_S(v)$ from sources to $v$ can be thought of as a proxy for $v$'s {\em complexity}: 
The more ST-paths terminate at $v$, the more complex is the formation of $v$ from all its ancestors. 
Sources  have minimal complexity (set to one) because they do not depend on anything else. 
On the other hand,
the number of paths $P_T(v)$ from $v$ to targets can be thought of as a proxy for $v$'s {\em generality}: 
The more ST-paths exist from $v$ to the set of targets, the more general or common is the function
provided by $v$ in the formation of distinct targets. 
Targets have minimal generality (set to one) because they are not used to form any other module. 

Based on these two definitions, 
{\em the path centrality of a vertex $v$ is the product of $v$'s complexity and generality.}  
This implies that path centrality is a metric that evaluates the topological importance of a
vertex in both the upward and downward directions of a dependency network. If the 
complexity and generality of a vertex are both high, relative to other vertices, that 
vertex will also have high path centrality.  
Fig.\ref{fig:path-centrality-complexity-generality}-b illustrates the complexity and generality 
of each vertex in a small dependency network.

The path centrality metric is more appropriate for identifying important vertices in a dependency network 
than other centrality metrics.
The betweenness or closeness centrality metrics, for instance, only consider the shortest paths between
two vertices, and so they would not assign high centrality to a vertex that participates in 
many (but relatively long) ST-paths.
Also, the in-degree or out-degree of a vertex is a local metric and it does not capture the positioning 
of that vertex in the entire dependency network. 
The Katz centrality metric, on the other hand, does not distinguish between intermediate vertices
and terminal (source or target) vertices, and it penalizes longer dependency paths. 
Some other centrality metrics, such as pagerank or eigenvector centrality \cite{newman2010networks}, 
are not appropriate for DAGs. 

\begin{figure}
	\centering
	\subfigure[Path centrality]
    {
        \includegraphics[scale=0.415]{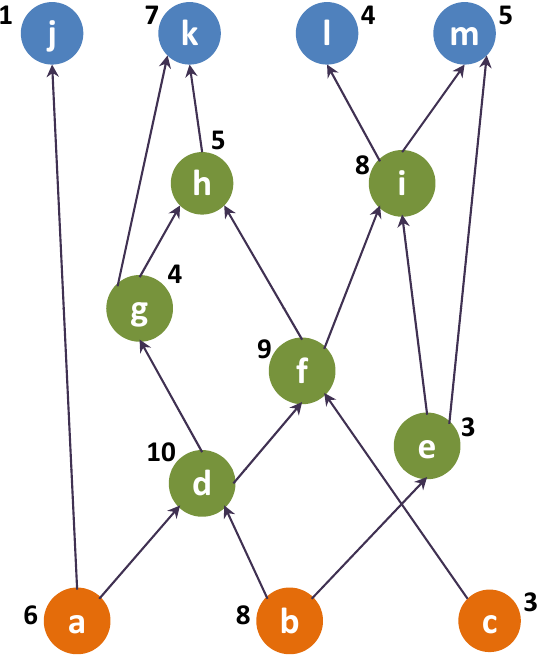}
    }
	\hspace{6mm} 
    \subfigure[Complexity \& generality]
    {
        \includegraphics[scale=0.415]{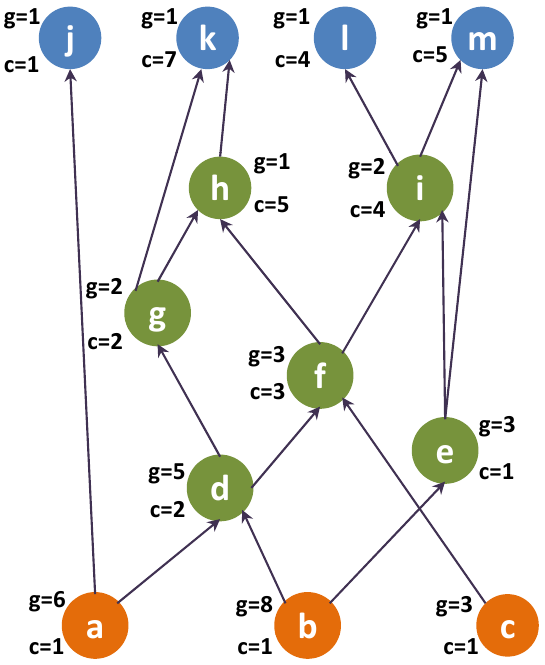}
    }
	\caption{The path centrality of each vertex (shown at the left) and the generality (top number) and complexity 
(bottom number) of each vertex (shown at the right).} 
    \label{fig:path-centrality-complexity-generality}
\end{figure}

\section{The core of a dependency network} \label{sec:core} 
Intuitively, the {\em core} of a dependency network can be defined 
as a subset of vertices that represent the most central modules in the underlying system.
One approach would be to rank vertices in terms of path centrality. 
This approach does not consider however that two or more vertices may be traversed by  
almost the same set of ST-paths. So, even though they may both have high path centrality, including 
one of them in the core would be sufficient to ``cover'' those source-target dependencies.   

Instead, we define the core of a dependency network based on the solution of an optimization 
problem: identify the smallest set of vertices that are traversed by almost all ST-paths -- namely, a
large fraction $\tau$ of all ST-paths.
We approach this problem in two steps. First, we consider the problem of computing the most central set 
of $k$ vertices, when $k$ is given, which has already been studied by Ishakian et al. in 
\cite{ishakian2012framework}. Then, we use an algorithm for the previous problem to
identify the minimum-size core for a given fraction $\tau$ of ST-paths.

\begin{definition}[Coverage of ST-paths]
Let ${\bf P}$ be the set of all ST-paths and ${\bf R}$ be a set of vertices. 
${\bf P}_{\bf R}$ is the subset of ${\bf P}$ that traverses at least one vertex in ${\bf R}$.   
The corresponding {\em path coverage} of ${\bf R}$ is defined as: 
\begin{equation}
\delta_{\bf R} = \frac{P_R}{P}
\end{equation} 
\end{definition}

\begin{problem}[Cardinality-Constrained Core with Maximum Coverage --  {\bf C$^3$MC} ]
Given a cardinality $k$, identify a set ${\bf \hat{R}}_k$ of $k$ vertices with maximum path coverage. 
\begin{equation}
{\bf \hat{R}}_k = {\arg\max}_{{\bf R} \subset {\bf V}: |{\bf R}|=k} \, \{{\delta}_{\bf R}\}
\end{equation}  
The set ${\bf \hat{R}}_k$ may not be unique but $\delta_{{\bf R}_k}$, denoted as $\hat{\delta}_k$ in the following, 
is the same for all optimal solutions. 
\end{problem}

The {\bf C$^3$MC} problem is NP-Complete; a proof is given by Ishakian et al. \cite{ishakian2012framework}.
However, the objective function of the {\bf C$^3$MC} problem is monotonically increasing (obvious) and submodular 
(proven in the Appendix), and so  
the following greedy algorithm is guaranteed to produce an $(1-\frac{1}{e})$-approximation of the optimal solution
\cite{nemhauser1978analysis} -- the same algorithm was also used in the work of Ishakian et al.\\
\begin{itemize}
\item Initially, the set ${\bf \hat{R}}_k$ is empty.
\item In each iteration:
\begin{enumerate}
\item Compute the path centrality of all vertices. 
\item Include the vertex with maximum path centrality in the set ${\bf \hat{R}}_k$, 
and remove it from the network (the case of ties is discussed in \S~\ref{sec:ties}). 
\end{enumerate}
\item The algorithm terminates when the set ${\bf \hat{R}}_k$ includes $k$ vertices. 
\end{itemize}

The run-time complexity of the path centrality computation is $O(E)$ 
and, in the worst case, we need to recompute the path centrality of all vertices in every iteration of the algorithm. 
So, the run-time complexity of the previous greedy algorithm is $O(k \, E)$. 
In Section~\ref{sec:runtime}, we show experimentally that the run-time of the core
identification algorithm increases quadratically with the number of vertices $N$
(if the average in-degree of non-source vertices remains constant).

\subsection{Path centrality ties} \label{sec:ties}
We now describe how the previous greedy algorithm breaks ties among vertices that have the 
same maximum path centrality.
Figure~\ref{fig:node-disjunction} illustrates that there are two different types of ties.
First, it may happen that the tied vertices are traversed by exactly the same set of ST-paths.
This will be the case, for instance, when those vertices are connected in a linear chain (vertices $a$, $b$ and $c$
in Figure~\ref{fig:node-disjunction}).  
Whenever there is a maximum path centrality tie among a set of vertices that are traversed 
by the same set of ST-paths,
we group these vertices as a single {\em Path-Equivalent Set} (PES). 
The elements of a PES are equivalent in the sense that they all capture the same set of ST-paths;
it is sufficient to include any one of them in the core. 

\revision{To identify a 
PES from a set of vertices that have equal path centrality, we pick a vertex $u$ from that set and 
remove it from the network. Then, we recompute the path centrality of the remaining tied vertices and find those
that now have zero path centrality. These vertices, together with $u$, form a PES. 
We repeat this process for any remaining tied vertices to identify additional PESs.} 

\begin{figure}
    \centering
		\includegraphics[scale=0.425]{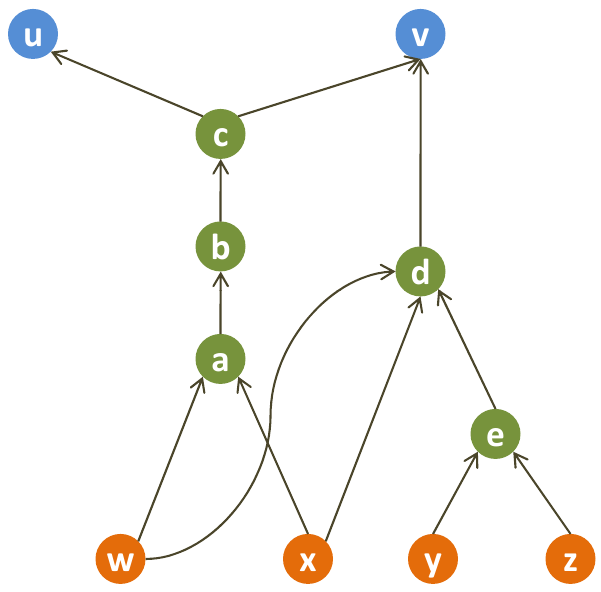}
    \caption{Vertices ${a,b,c}$ have equal path centrality and they are traversed by the same set of ST-paths. Vertex $d$ has equal path centrality but it is traversed by a different set of ST-paths. If there is a tie between the vertices ${a,b,c,d}$ in an iteration of the core identification algorithm, 
either the first three vertices will be added in the core as a single Path-Equivalent Set (PES) 
representing the set $\{a,b,c\}$, or only vertex $d$ will be added in the core.}
    \label{fig:node-disjunction}
\end{figure}

Second, there may be a maximum path centrality tie between two or more vertices that are 
traversed by different sets of ST-paths (for instance, vertices $a$ and $d$ in Figure~\ref{fig:node-disjunction}).  
Ties of this type can be randomly broken, 
as long as it is sufficient to identify a single core instead of enumerating all possible cores.
If it is necessary to identify all possible cores, 
we can consider separately every possible tie-breaker. This creates a tree of 
possible execution paths in which each leaf corresponds to a candidate core with $k$ elements.

\subsection{The path coverage threshold $\tau$}
In practice, the cardinality of the core is not known a priori. 
Instead, we can set the cardinality of the core heuristically, as follows. 

If it was required that the core is traversed by {\em all} ST-paths, 
the identification of the core would be equivalent to the well-known minimum-cut problem
that can be solved efficiently with a max-flow algorithm \cite{ravindra1993network}. 
However, requiring that the core covers every single ST-path is a very stringent condition;
we have observed that in real dependency networks 
there are often some direct ST-paths that do not traverse any intermediate vertices or that do not share common 
intermediate vertices with most other ST-paths. 

So, a more pragmatic definition is that the core of a dependency network 
should cover at least a fraction $\tau$ of all ST-paths,
where $\tau$ is a given {\em path coverage threshold} that will typically be close to one.
To compute the core, we solve the {\bf C$^3$MC} problem iteratively, starting with $k$=1.
The set ${\bf \hat{R}}_k$ is computed incrementally by adding one more vertex in ${\bf \hat{R}}_{k-1}$,
which requires only $O(E)$ additional operations.
The algorithm terminates when the path coverage $\hat{\delta}_k$ first exceeds $\tau$.

We use the following notation to represent the core of a dependency network for a given $\tau$:
the set of vertices in the core is ${\bf C(\tau)}$, 
the size of the core is $C(\tau)$,
and the path coverage of the core is $\delta_{{\bf C}(\tau)} \geq \tau$. 
Note that ${\bf C(\tau)}$ and $\delta_{{\bf C}(\tau)}$ may not be unique if there were ties 
during the computation of the core.
The core size $C(\tau)$, however, is unique. 

\revision{
The incremental increase of the path coverage of ${\bf C}$ when $v$ is first included in that core
is denoted by $\delta_{{\bf C}(v)}$. This metric also represents the {\em weight} of $v$ in the core.  
}

\section{Hourglass dependency networks} \label{sec:hourglass}

\subsection{Network flattening and H-score}
Informally, the hourglass property of a dependency network can be defined as having a small core,
even when the path coverage threshold $\tau$ is close to one.  
To make the previous definition more precise, we can compare the core size $C(\tau)$ of the given dependency
network ${\bf G}$ with the core size of a derived dependency network that maintains the same 
set source-target dependencies 
of ${\bf G}$ but that is not an hourglass by construction. 

To do so, we create a {\em flat dependency network} ${\bf G_f}$ from ${\bf G}$ as follows:
\begin{enumerate}
\item ${\bf G_f}$ has the same set of source and target vertices as ${\bf G}$ but it does not have any intermediate vertices.
\item For every ST-path from a source $s$ to a target $t$ in ${\bf G}$, we add a direct edge from $s$ to $t$ in ${\bf G_f}$. If there are $w$ edges from $s$ to $t$ in ${\bf G_f}$, they can be replaced with a single edge of weight $w$. 
\end{enumerate}
Note that ${\bf G_f}$ preserves the source-target dependencies of ${\bf G}$: 
each target in ${\bf G_f}$ is constructed based on the 
same set of ``source ingredients'' as in ${\bf G}$. 
Additionally, the number of ST-paths in the original dependency network is equal 
to the number of paths in the weighted flat network (an edge of weight $w$ counts as $w$ paths). 
However, the dependency paths in ${\bf G_f}$ are direct, without forming any intermediate modules that could be 
reused across different targets. 
So, by construction, the flat network ${\bf G_f}$ cannot have the hourglass property.

Suppose that $C_f(\tau)$ represents the core size of the flat network ${\bf G_f}$. 
The core of ${\bf G_f}$ can include a combination
of sources and targets, and it cannot be larger than either the set of sources or targets.
Additionally, the core of the flat network is larger or equal than the core of 
the original network (because the core of the flat network also covers at least a fraction $\tau$ of the ST-paths
of the original network -- but the core of the original network may be smaller because it can also include
intermediate vertices together with sources or targets).
So,
\begin{equation}
C(\tau) \leq C_f(\tau) \leq \min\{S, T\}
\end{equation} 

To quantify the extent at which ${\bf G}$ exhibits the hourglass effect,
we define the {\em Hourglass Score}, or {\em H-score}, as follows: 
\begin{equation}
H(\tau) = 1 - \frac{C(\tau)}{C_f(\tau)} 
\end{equation}
Clearly, $0\leq H(\tau) < 1$. 
The H-score of $G$ is approximately one if the core size of the original network is negligible compared to
the the core size of the corresponding flat network.
Figure~\ref{fig:h-score} illustrates the definition of this metric. 

\begin{figure}
	\centering
    {
        \includegraphics[scale=0.4]{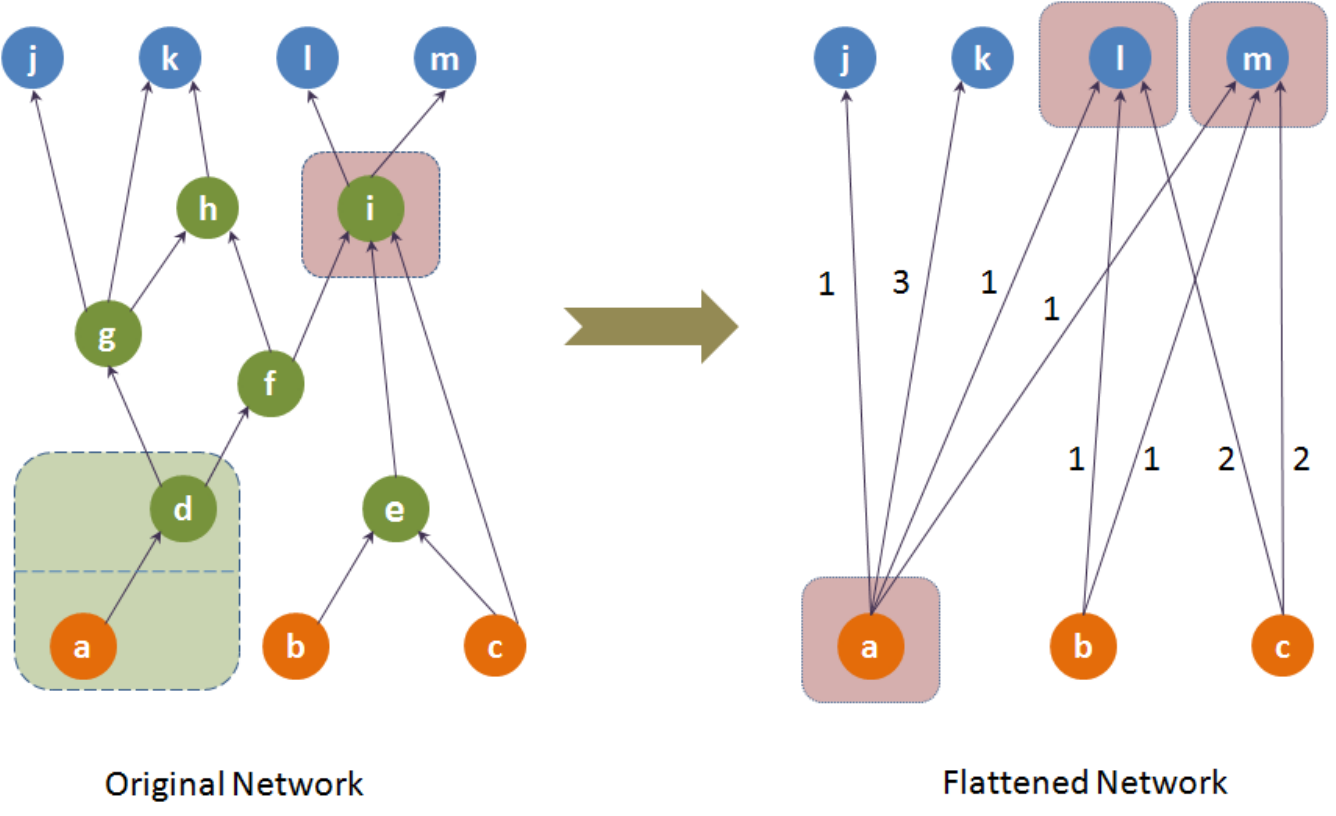}
		}
	\caption{The weight of an edge in the flattened network represents the number of ST-paths 
between the corresponding source-target pair in the original dependency network.
When the path coverage threshold is $\tau$=90\%, the core of the original network (left) is 
the set $\{\{a,d\},i\}$ ($\{a,d\}$ form a \textit{Path-Equivalent Set} and only one of them should be 
included in the core). The core of the flattened network (right) for the same $\tau$ is $\{a,l,m\}$. 
The H-score of the original network is $1-\frac{2}{3} = 0.33$.} 
  \label{fig:h-score}
\end{figure}

An ideal hourglass-like network would have a single intermediate vertex that is traversed by 
every single ST-path (i.e., $C(1)$=1), and a large number of sources and targets none of which 
originates or terminates, respectively, a large fraction of ST-paths (i.e., a large value of $C_f(1)$). 
The H-score of this network would be approximately equal to one.

\subsection{Coverage and location of a vertex}
Another property of an ideal hourglass network is that all vertices that participate in 
ST-paths should be reachable from the waist, either in the upstream or in the downstream direction.  
To quantify this property, we define the {\em core vertex coverage} metric $U_{\bf C}$, where ${\bf C}$ is the core
of the given dependency network: 
\begin{equation} 
U_{\bf C} = \frac{\sum_{v\in {\bf V_{ST}}}\phi_{\bf C}(v)}{V_{ST}}
\end{equation}
where ${\bf V_{ST}}$ is the set of vertices that are present in one or more ST-paths,
and
$\phi_{\bf C}(v)$ is equal to one when $v$ is a vertex that can reach, or that can be reached from, 
at least one vertex in the core ${\bf C}$; $\phi_{\bf C}(v)$ is zero otherwise. 
The metric $1-U_{\bf C}$ can be thought of as the fraction of
vertices in ``tendril'' paths that bypass the waist.


\revision{
We can also associate a {\em location} with each vertex to capture its relative position in the dependency 
network between sources and targets.
Computing the location of a vertex based on the {\em topological sorting} of the depending network
would not be an appropriate approach in this context 
because that ordering is determined from the maximum distance of a vertex from the set of sources.   
Another way to place intermediate vertices between sources and targets is to consider the 
complexity $P_S(v)$ and generality $P_T(v)$ metrics that were defined in Section~\ref{sec:dependencies}. 
Recall that sources have the lowest complexity value (equal to $1$), while targets have the lowest generality 
value (equal to $1$).
The following equation defines a location metric based on $P_S(v)$ and $P_T(v)$,
\begin{equation}
L(v) = \frac{P_S(v) - 1}{(P_S(v) - 1) + (P_T(v) - 1)}
\end{equation}
$L(v)$ varies between 0 (for sources) and 1 (for targets). 
If there is a small number of paths from sources to a vertex $v$ (low complexity) 
but a large number of paths from $v$ to targets (high complexity), $v$'s role in the network is more similar
to sources than targets, and so its location should be closer to 0 than 1. 
The opposite is true for vertices that have high complexity but low generality -- their location should be
closer to 1 than 0.
}

We can also calculate an {\em average location for the entire core}.
The weight of a core vertex $v$ is proportional to the incremental increase $\delta_{{\bf C}(v)}$ 
of the path coverage of ${\bf C}$ when $v$ was first included in that core.
So, the average location of the core ${\bf C}$ can be defined as 
the following weighted average of the location of the core vertices,
\begin{equation}
L_{\bf C} = \frac{\sum_{v \in {{\bf C}}} [\delta_{{\bf C}(v)} \, L(v)]}{\sum_{v \in {{\bf C}}} \delta_{{\bf C}(v)}} 
\end{equation}

\section{Case studies} \label{sec:realnets}
In this section, we apply the previous analysis framework in six 
dependency networks from three different disciplines:
two call-graphs (software engineering), two metabolic networks (biology, biochemistry) and two citation
networks (information science). 
First, we present the corresponding datasets and the process to convert them into dependency networks.  
Table~\ref{tab:data-networks} shows the basic characteristics of the six dependency networks.
Note that the networks vary considerably in terms of density, fraction of source or target vertices, and average
ST-path length.

\subsection{Datasets and dependency network construction} 
\subsubsection{Call-graphs}
Any non-trivial software system is written in a modular and hierarchical manner: ``functions'' (or ``methods'')
are defined for distinct processing of tasks,
and a function performs its task by calling other, simpler functions.
The resulting hierarchy of function calls is referred to as the {\em call-graph} of that system.
The sources of a software system are elementary functions that do not call any other function,
functions provided by linked libraries, 
or functions that communicate directly with the primitives provided by the underlying hardware
(e.g., device drivers) or the operating system.
The targets are various applications or utilities that are called by external entities
(the human user, other applications, libraries, systems, etc).

In the following, we analyze the call-graph of two complex and popular software systems:
OpenSSH (version 5.2, written in C) and the Apache Math library (version 3.4, written in Java).
The source code for OpenSSH was retrieved in a curated form from an earlier study \cite{bhattacharya2012graph}
and the call-graph was constructed using CodeViz \cite{codeviz}.
For the Apache Math library, we use the Java dependency graph extraction tool \cite{java-callgraph}.
We follow the earlier convention that when a function $v$ calls a function $u$, there is an edge from $u$ to $v$.

In the case of OpenSSH, we first remove from the call-graph all functions that include 
the following keywords in their name: 
{\em main} (included in many C files for testing different parts of the system independently),
{\em log} and {\em debug} (used during software development for debugging),
{\em fail}, {\em fatal}, {\em error} (generic functions called in case of unexpected errors),
and {\em exit} (program termination). 
The previous functions have high path centrality mostly because they are called by many other functions
but they do not provide any information about the system architecture.  
Similarly, for the Apache Math library, we remove all {\em exception handlers} (methods associated
with unexpected errors) and the methods of the {\em Object} class, which is the generic 
parent class for all Java programs.

The use of {\em recursive programming} (i.e., one or more functions forming a loop in the call-graph) creates
cycles. As discussed in Section~\ref{sec:dependencies}, each call-graph is transformed into a dependency network
by first partitioning the call-graph in a set of SCCs, and then replacing each SCC with a single {\em super-vertex}.
The number and size of the super-vertices in each call-graph are shown in Table~\ref{tab:data-networks}.

\begin{table}
\scriptsize
\begin{adjustbox}{width=1\textwidth}
\begin{tabular*}{\textwidth}{l @{\extracolsep{\fill}} cccccc}
		\multirow{5}{*}{Properties} & \multicolumn{6}{c}{Networks} \\ 
		\cline{2-7}  
		& \multicolumn{2}{c}{Software Call-graphs} & \multicolumn{2}{c}{Metabolic Networks} & \multicolumn{2}{c}{SCotUS Citation Networks} \\ 
		& \textit{OpenSSH} & \textit{Apache Math} & \textit{Rat} & \textit{Monkey} & \textit{Abortion} & \textit{Pension} \\
		& \textit{v-5.2} & \textit{v-3.4} & & & \textit{Cases} & \textit{Cases} \\
		\hline \hline
		Vertices                     & 1300 			 & 6685  			 & 843  			 & 845  			 & 1502  & 1290 \\
		Largest component (L-WCC)    & 99\% 			 & 95\%  			 & 64\% 			 & 61\% 			 & 100\% & 95\%\\
		Edges 				               & 4583 			 & 14823 			 & 612  			 & 588  			 & 3266  & 1555 \\
		Average degree               & 3.5 			   & 2.3  			 & 1.2 			   & 1.1 			   & 2.2   & 1.3 \\
		Targets                      & 22\% 			 & 35\%  			 & 24\% 			 & 25\% 			 & 20\%  & 24\% \\
		Intermediates                & 45\% 			 & 32\%  			 & 56\% 			 & 55\% 			 & 17\%  & 11\% \\
		Sources                      & 33\% 			 & 33\%  			 & 20\% 			 & 20\% 			 & 63\%  & 65\% \\
		Average ST-path length       & 10.4        & 8.8  			 & 8.3  			 & 8.1  		   & 14.1  & 5.1 \\
		Number of super-vertices     & 3           & 24    			 & 10   			 & 9    			 & 0     & 0\\
		Super-vertex size            & $2.5\pm0.5$ & $3.2\pm4.1$ & $9.4\pm7.4$ & $9.3\pm7.2$ & - 		 & - \\
		\hline
\end{tabular*}
\end{adjustbox}
\caption{Basic characteristics of analyzed dependency networks. All entries after the first row correspond to the 
Largest Weakly Connected Component (L-WCC).}
\label{tab:data-networks}
\end{table}

\subsubsection{Metabolic networks}
Metabolic networks show how individual chemical reactions in the cell are  
combined to form the complex pathways associated with functions such as glycolysis or the
biosynthesis of pyrimidine or purine \cite{palsson2015systems}. 
There are large databases that provide reasonably accurate and complete
metabolic networks for many species \cite{kanehisa2000kegg}. 
The KEGG database, in particular, has been curated for more than a decade to include all known metabolic reactions
that conform with the available sequenced genome information \cite{kanehisa2014data}.

In a metabolic network, the products of one chemical reaction can be used as substrates for
another chemical reaction. This flow of matter and energy can be represented as a directed network
where vertices correspond to metabolites, and an edge from $u$ to $v$
means that there is at least one reaction in which $u$ is a substrate (input) and $v$ is a product (output).
Although most chemical reactions are reversible, most metabolic pathways are
typically considered to flow in one direction.
In the KEGG database, each reaction is associated with the most common direction in a given pathway.

\begin{figure}
    \centering
                \includegraphics[scale=0.5]{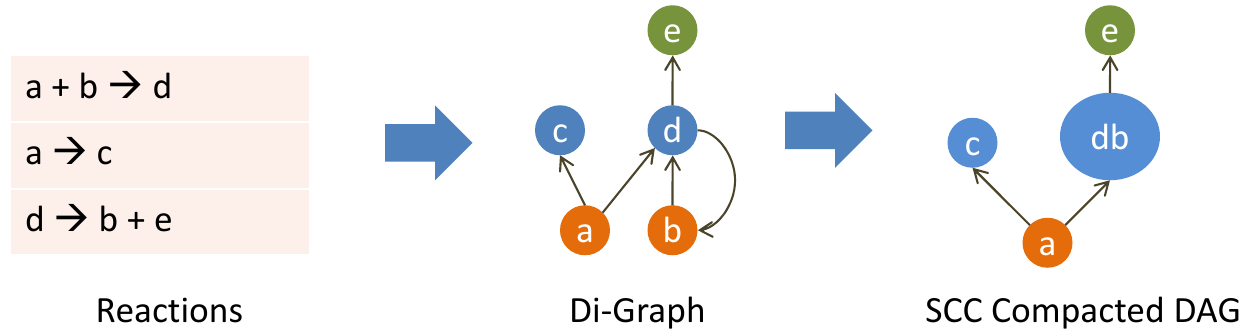}
    \caption{Construction of a dependency network from a given set of metabolic reactions.}
    \label{fig:metabolic-network-construction}
\end{figure}
A metabolic network often includes cycles.
If two or more metabolites are present in the same cycle, it means that there is no hierarchical
ordering between them -- they are mutually interdependent.  
So, as in the case of call-graphs, after constructing the initial metabolic network we replace each
SCC with a single {\em super-vertex} that represents the corresponding set of metabolites in that SCC.
Figure~\ref{fig:metabolic-network-construction} shows a small example of how a given set of chemical reactions
can be first transformed to a directed network, and then to a dependency network.

In the following, we present results for the metabolic networks of two organisms:
\textit{Rattus norvegius} (rat) and \textit{Macaca mulatta} (monkey).
Both datsets were retrieved from the 2014 KEGG \cite{kanehisa2014data} database.
For each metabolic network we only analyze the Largest Weakly Connected Component (L-WCC).
The smaller connected components correspond to distinct pathways that do not have any
common metabolites with the L-WCC.

\subsubsection{SCotUS citation network}
Dependency networks can also capture the flow of information, knowledge or legal precedent in research
publications, patents, court cases, and so on.
Here, we focus on the citation network of court judgments made by the Supreme Court of the United States (SCotUS).
We rely on a dataset collected by Fowler \cite{fowler2007network, fowler2008authority} that
includes all SCotUS cases between 1754 and 2002.
Judicial decisions often leverage the precedent of earlier judgments to support their arguments,
forming a directed citation network.
Following our earlier convention, if a court case $v$ refers to a previously settled case $u$,
there is an edge from $u$ to $v$.
In the case of citation networks, the hierarchy of the dependency network
implies a temporal ordering between connected vertices: if there is a path from $u$ to $v$, $u$ appeared before $v$.

In this paper, we focus on two legal matters that have been the subject of many SCotUS cases:
the {\em legality of abortion} and various {\em pension (or benefits) disputes}.
First, we use the Legal Information Institute \cite{cornell-law} of Cornell University's online legal library to find the set of
SCotUS cases that focus on each of these two  matters.
Suppose that ${\bf X}$ is the set of SCotUS cases that are related to one of these two matters.
We construct the corresponding citation network by including all cases in ${\bf X}$
as well as any other SCotUS case that directly cites, or is directly cited by, a case in ${\bf X}$.
This expansion of the citation network with cases that do not belong in ${\bf X}$ is important because the
SCotUS decisions about a certain matter may depend on, or they may have influenced, decisions regarding
other legal matters.

The selection of sources and targets in a citation network may appear as somewhat arbitrary.
This is an important issue that deserves further discussion.
The sources and targets of a dependency network should be selected based on the scope, or boundaries,
of the underlying system we aim to understand.
Considering only parts of that system, or merging it with other systems, can mislead the analysis.
For instance, if we want to identify the most significant publications associated with a specific
problem in network science, say community detection, it would be incorrect to only consider the
citation network of publications that focus on spectral graph partitioning, and it would also be incorrect
to consider every publication that relates broadly to graphs or networks.
We admit, however, that in some cases it may be challenging to uniquely identify the scope, or boundaries,
of a given dependency network; this is a problem that deserves further study.

The two citation networks are acyclic, and so we do not create any super-vertices.

\subsection{Analysis of dependency networks}

\begin{figure}
        \centering
        \subfigure[OpenSSH-v5.2]
    {
        \includegraphics[scale=0.31]{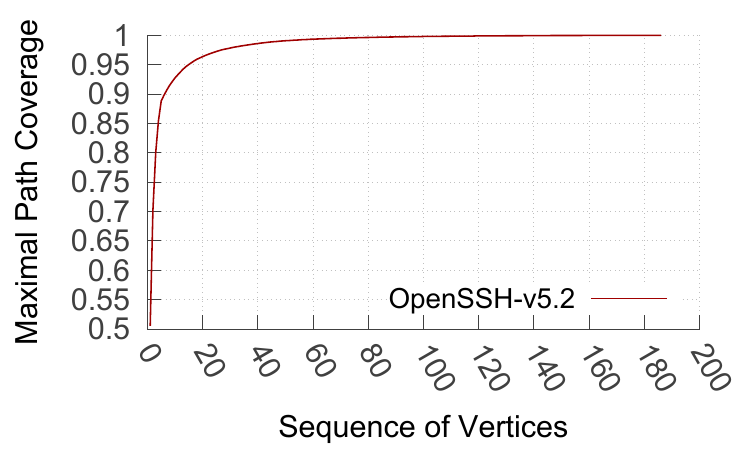}
    }
        \subfigure[Apache-Math-v3.4]
    {
        \includegraphics[scale=0.31]{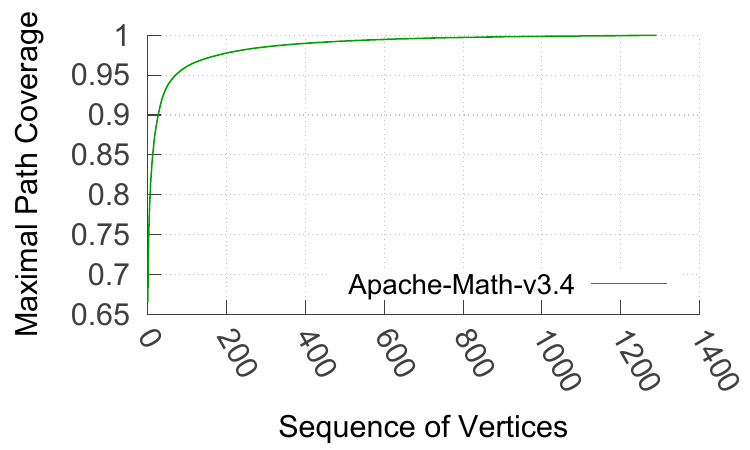}
    }
        \vspace{3mm}
  \subfigure[Rat Metabolic]
    {
        \includegraphics[scale=0.31]{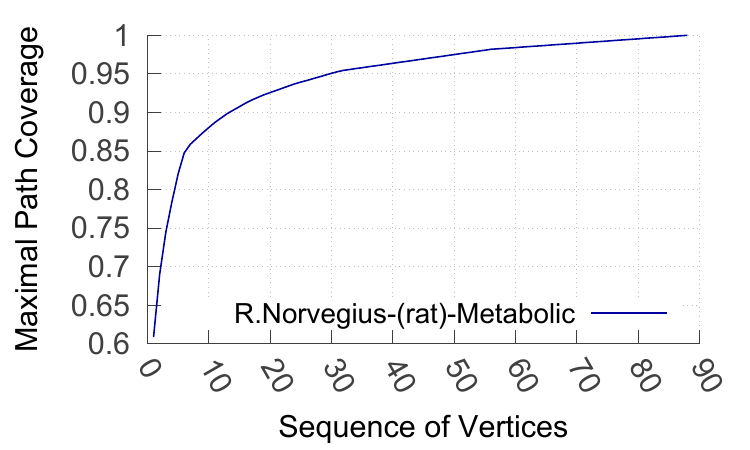}
    }
        \subfigure[Monkey Metabolic]
    {
        \includegraphics[scale=0.31]{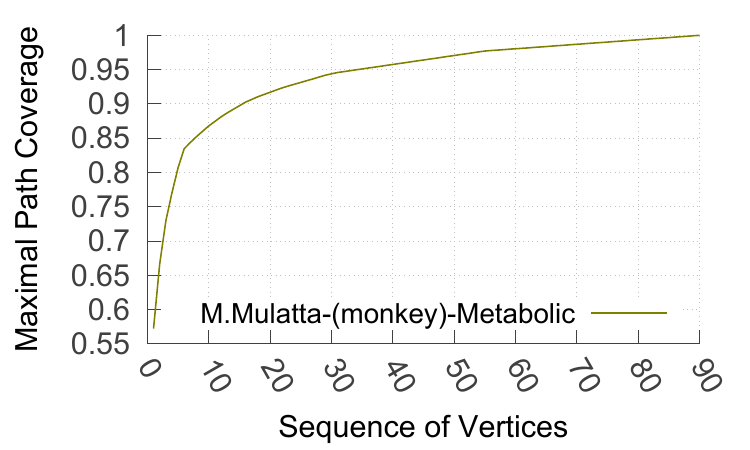}
    }
        \vspace{3mm}
        \subfigure[Abortion Cases]
    {
        \includegraphics[scale=0.31]{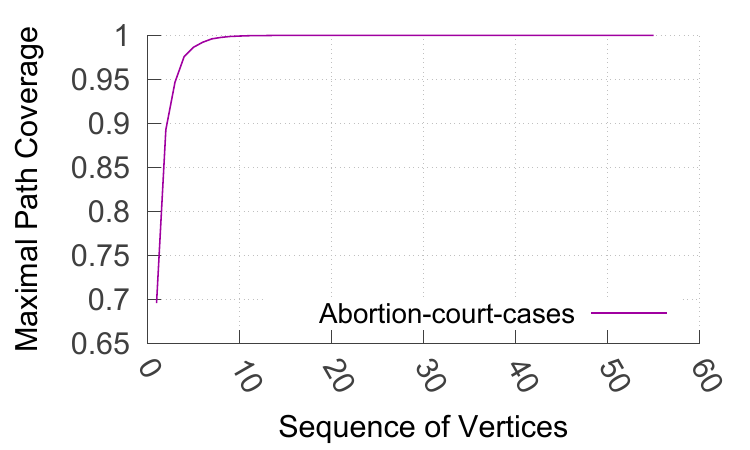}
    }
        \subfigure[Pension Cases]
    {
        \includegraphics[scale=0.31]{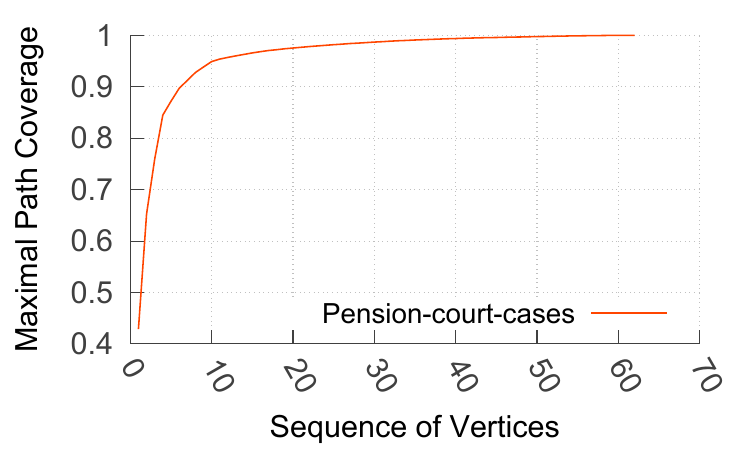}
    }

        \caption{The maximum path coverage $\hat{\delta}_k$ as a function of $k$ for the
six dependency networks.} 
    \label{fig:path-coverage-knee}
\end{figure}

Figure~\ref{fig:path-coverage-knee} shows the maximum path coverage $\hat{\delta}_k$ that
results from solving the {\bf C$^3$MC} problem iteratively, for increasing values of $k$, until $\hat{\delta}_k$
approaches 100\%.
Note that all six curves are strongly concave and that almost all ST-paths are covered with a very small
number of vertices relative to the size of each network.

\begin{figure}
    \centering
        \includegraphics[scale=0.6]{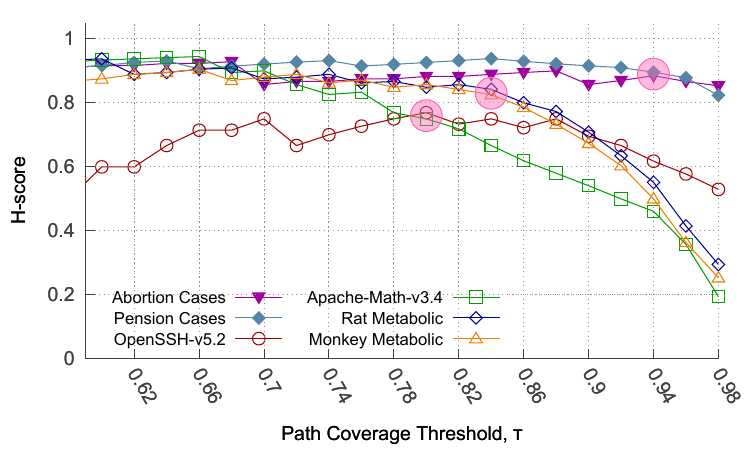}
    \caption{Effect of $\tau$ on H-score. The value of $\tau$ that we use in the rest of the analysis for 
each network is shown with a magnified symbol.}
    \label{fig:tau-effect-real}
\end{figure}

Figure~\ref{fig:tau-effect-real} examines the effect of the path coverage threshold $\tau$
on the resulting H-score of each network. 
As expected, if we require that the core covers a higher fraction of ST-paths
the core will need to be larger.
The two citation networks strongly exhibit the hourglass effect,
as their H-score remains close to 0.9 even when the core covers 90-95\% of all ST-paths.
The two metabolic networks can be also viewed as hourglass networks, 
with an H-score of about 0.85, but only as long as the core covers less than 80-85\% of all ST-paths.
Their core would need to be significantly larger to cover the remaining paths.
The two call-graphs are structured differently and they exhibit a weaker hourglass effect:
OpenSSH's H-score varies erratically between 0.6 to 0.8 depending on $\tau$, 
while the Apache Math library's H-score quickly drops below 0.8 when the core needs
to cover more than 80\% of all ST-paths. 

Based on Figure~\ref{fig:tau-effect-real}, in the rest of the analysis we set $\tau$ 
at the largest value before the H-score shows a significant drop.
After selecting the same value for each network type, we set $\tau$ as follows:
call-graphs $\tau$=80\%, metabolic networks $\tau$=85\%, and citation networks $\tau$=95\%. 

\begin{table}
\scriptsize
\centering
\begin{adjustbox}{width=1\textwidth}
\begin{tabular*}{\textwidth}{l @{\extracolsep{\fill}} cc@{\hskip 0.10in}cc@{\hskip 0.10in}cc}
                \multirow{5}{*}{Core Properties} & \multicolumn{6}{c}{Networks} \\
                \cline{2-7}
                & \multicolumn{2}{c}{Software Call-graphs} & \multicolumn{2}{c}{Metabolic Nets} & \multicolumn{2}{c}{SCotUS
 Citation Nets} \\
                & \textit{OpenSSH} & \textit{Apache Math} & \textit{Rat} & \textit{Monkey} & \textit{Abortion} & \textit{Pension} \\
                & \textit{v-5.2} & \textit{v-3.4} & & & \textit{Cases} & \textit{Cases} \\
                \hline \hline
							Path coverage threshold $\tau$ & 0.8   & 0.8   & 0.85  & 0.85  & 0.95  & 0.95 \\
\hline
                            Core size $C$    & 3     & 9     & 7     & 8     & 4     & 11 \\
                                    $C/V$    & 0.002 & 0.001 & 0.01  & 0.02  & 0.002 & 0.008 \\
                                  H-score    & 0.77  & 0.75  & 0.82  & 0.81  & 0.86  & 0.89 \\
                 Number of distinct cores    & 1     & 1     & 1     & 1     & 1     & 1 \\
                             SCCs in core    & 0     & 1     & 3     & 3     & 0     & 0 \\
                    Number of PES in core    & 0     & 2     & 1     & 2     & 0     & 0 \\
                    Core vertex coverage     & 0.35  & 0.21  & 0.53  & 0.57  & 0.82  & 0.48 \\
        	   Average core location     & 0.50  & 0.12  & 0.45  & 0.44  & 0.74  & 0.24 \\
                \hline
\end{tabular*}
\end{adjustbox}
\caption{Properties of the identified core for each dependency network.}
\label{tab:waist-summary}
\end{table}

Table~\ref{tab:waist-summary} summarizes the key properties of the core of each dependency network. 
The size of the core ${\bf C}$ varies from 0.1\% to 1\% of the network size $V$.
In all six networks we identified only one core (no ties); some vertices in the core of 
the metabolic networks and of the Apache Math network are super-vertices. 
For the selected values of $\tau$, the H-score is higher than 0.75 in all networks.

Even though the core of each network is quite small, relative to the total number of vertices,
none of these networks can be described as an ``ideal hourglass''.
This is shown both in terms of the H-score in Figure~\ref{fig:tau-effect-real} and by the core vertex coverage:  
there is a significant fraction of vertices (about 20-80\%, depending
on the network) in ST-paths that bypass the core (``tendril paths''). 
The fraction $1-\tau$ of ST-paths that bypass the core traverse at least two vertices each (a source
and a target). 
When these tendrils traverse several intermediate vertices however, the core
vertex coverage can be significantly lower than $2 \times (1-\tau)$. 
As shown in the modeling results of the next section 
(see Figure~\ref{fig:model-alpha-effect-2}-a), such low values of the core vertex coverage
can be expected when 
each vertex has a bias to depend on vertices of similar complexity with itself 
(rather than to depend directly on sources or low complexity vertices)
but where that bias is not strong enough to generate an ``ideal hourglass'' in which a
small set of intermediate vertices is traversed by all ST-paths.



\begin{figure}
        \centering
    {
        \includegraphics[scale=0.425]{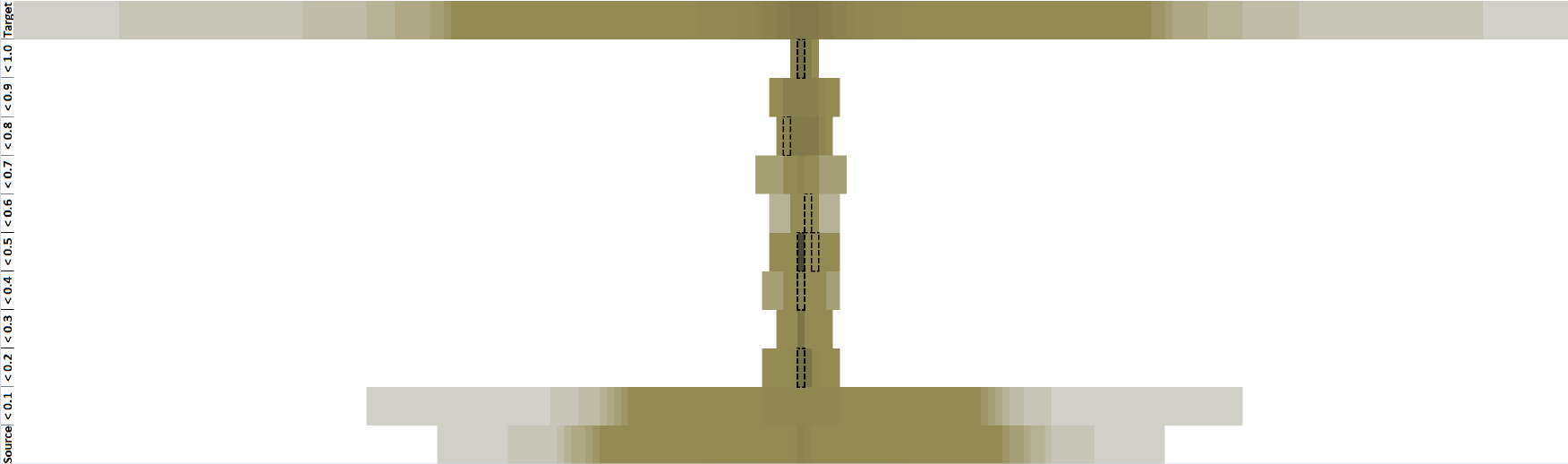}
    }
        \caption{A visualization of the Rat metabolic network that 
places vertices in the vertical direction based on their location metric.
Specifically, we discretize the location metric in 12 bins (the lowest bin for sources, the highest for targets,
and the 10 intermediate bins for intermediate vertices with each bin accounting for $1/10$ of the $0-1$ range).
The path centrality of each vertex is represented by its color (darker for higher path centrality).
Vertices with higher centrality are placed closer to the vertical midline.
The core nodes are represented by dotted rectangles.}
  \label{fig:core-locations-heatmap}
\end{figure}

Figure~\ref{fig:core-locations-heatmap} is a visualization of the Rat metabolic dependency network that 
places vertices in the vertical direction based on their location metric (see caption for more details
about this visualization). 
Note that the highest path centrality vertices tend to be at intermediate locations -- but some of the sources
and targets in this network also have high path centrality. 
Also, about half of the core vertices are 
located close to the center of the network (location=0.5), while the rest are closer to sources or targets.
The path centrality and the weight of each core vertex are shown in Table~\ref{tab:core-nodes-rat}.
The location of the core vertices varies significantly across different networks.
Similar visualizations for the other five networks are given in Figure~\ref{fig:heatmap-othernets}.

\subsection{Which are the vertices at the waist of the hourglass?}
The complete list of core vertices for each dependency network, together with
a short description, the path centrality and the weight of  core vertex, 
are given in the {\em Appendix}.
Here we comment on the qualitative properties of the waist vertices for each network.

The three vertices at the core of the OpenSSH call-graph are shown in
Table~\ref{tab:core-nodes-openssh}.
They are functions to send and receive network packets, and to execute 
Unix shell commands. 
This is not surprising given that OpenSSH is a communication-oriented utility that
can be used as a secure remote terminal, among other applications.

The Apache Math library has a core with nine vertices, listed in Table~\ref{tab:core-nodes-java}.
These methods cover floating point arithmetic operations, matrix decomposition, 
vector computations, and the ``constructors'' of some classes related 
to mathematical and geometric objects.  

The vertices at the waist of the two metabolic networks are shown in
Tables~\ref{tab:core-nodes-rat} and \ref{tab:core-nodes-monkey}.
In biochemistry,
the following twelve {\em precursors} are often considered as the most important metabolites, providing an
interface  between the different catabolic pathways with the various biosynthesis pathways:
Glucose-6-Phosphate, Fructose-6-Phosphate, Glycerone Phosphate, Glyceraldehyde 3-Phosphate, Phosphenol Pyruvate,
Pyruvate, Ribose-5-Phosphate, Erythrose-4-Phosphate, Acetyl-CoA, a-ketogluterate, Oxalocetate, and Succinyl-CoA
\cite{smolke2009metabolic, cellbiology, tanaka2005highly}; it is not clear however if these precursors are
equally important for every species or if the previous list should include additional metabolites.
In the case of Rat metabolic network, the identified waist includes eight of the previous precursors,
plus few more key compounds for the synthesis of enzymes, lipids, fatty acids, etc.
In the case of the Monkey metabolic network, the waist includes seven precursors.
Several waist vertices are the same with those in the Rat 
(or similar, in the case of SCCs or PES).

The vertices at the waist of the two citation networks are shown in
Tables~\ref{tab:core-nodes-abortion} and \ref{tab:core-nodes-pension}.
The Cornell Legal Information Institute (CLII) lists several {\em landmark} SCotUS cases for every major legal matter
in the US \cite{cornell-law}.
This classification of cases as landmarks is based on input from legal experts.
All court cases that appear in the waist of the Abortion network are also listed as landmarks by CLII.
In the Pension network, five out of the seven waist vertices are also listed as landmarks by CLII.

\section{A model of dependency network formation} \label{sec:model}
What determines whether a dependency network will exhibit the hourglass property or not? 
Let us think about this question in the context of Lego-like toys, in which 
a vertex $v$ corresponds to a Lego module and its incoming edges show which simpler Lego modules
are required to put $v$ together.
The sources correspond to the given elementary building blocks and the targets correspond
to the final objects we want to construct.  
One extreme approach is to create every object only from the elementary blocks, 
without reusing any intermediate modules that have been previously constructed.
Another approach is to reuse as much as possible intermediate modules, 
expecting that this will require less work. 
In practice, of course, the design approach is always somewhere in the middle, 
with more complex intermediate modules constructed from simpler intermediate modules as well as 
elementary blocks. 

To understand the implications of this \enquote{preference for reuse},
we present here a simple, probabilistic model for the gradual formation of a dependency network.
The model focuses on how each new vertex selects its incoming edges among the set of vertices
that have been previously constructed. 
Through a single {\em reuse parameter} $\alpha$, the model generates dependency networks in which every new vertex 
depends on either mostly sources (leading to flat, non-hourglass networks) or on the more recently constructed 
intermediate vertices (resulting in hourglass networks), or anything in between.

We refer to the following model as {\em Reuse-Preference} or {\em RP-model}. 
There are $V$ vertices that consist of $S$ sources, $M$ intermediates and $T$ targets. 
The vertices arrive in the network, or they are created, sequentially or in batches, as follows.
First, all sources are created at the same time; they represent the elementary modules of the
underlying system.
Then, the intermediate vertices are created sequentially (the case of batch arrivals is considered in
Section~\ref{sec:model-fit-real}). 
Suppose that $v$ is the $m$'th intermediate
vertex that has arrived in the network, with $1\leq m \leq M$. We assign vertex $v$ to {\em rank-0},
and the previously created $m-1$ intermediate vertices to {\em rank-1} through {\em rank-(m-1)} 
(in order of arrival -- the oldest intermediate vertex always has {\em rank-(m-1)}). 
The $S$ sources are randomly given ranks $m$ through $m+(S-1)$.
Note that the ranking changes every time a new vertex is added.
The $T$ targets are created in a batch at the end of the network formation process, 
and they are given the same rank ({\em rank-0}). 

Suppose that we are given the in-degree $d_{in}(v)$ of $v$.  
The origin of every incoming edge to $v$ is determined as follows. 
When the $m$'th intermediate vertex $v$ is created, we select the vertices it will depend on
probabilistically. In the following, we use the Zipf distribution (but other statistical models 
could also be used). Specifically, the probability that $v$ will have an 
incoming edge from a vertex $u$ at {\em rank-$r$} is given by:
\begin{equation} \label{eq:zipf}
\mbox{Prob}[(u,v)\in {\bf E}] = \frac {r^{-\alpha}} {\sum_{i=1}^{S+m-1} i^{-\alpha}}, \quad 1 \leq r \leq S+m-1
\end{equation} 
The incoming edges to the $T$ target vertices are determined in the same way; note that a target will never
by connected to another target because all targets are added in the same batch, having {\em rank-0}. 
Additionally, we artificially exclude the possibility of multi-edges.

When $\alpha=0$ the newly created vertex $v$ selects dependencies uniformly across all earlier vertices. 
As $\alpha$ increases above zero, 
$v$ has a preference for more recently constructed vertices, increasing the level of reuse 
in the dependency network.
On the other hand, as $\alpha$ decreases below zero, 
$v$ has a preference for older vertices, i.e., closer to the sources, decreasing the level of reuse.  


\textcolor{blue}{
For large values of $\alpha$, it is possible that many sources will not be chosen by any vertex higher in the hierarchy. To ensure that there are no disconnected sources (i.e., elementary blocks that are not utilized by any other module), we add an edge from every source to the first intermediate vertex, say $v$. So, instead of its originally assigned in-degree $d_{in}(v)$, vertex $v$ now has $S$ incoming edges. These extra edges however can inflate the path centrality of $v$ and of any vertices that depend on $v$. To maintain the path centrality of $v$ relative to the rest of the intermediate and target vertices, we need to increase the weight of the edges from sources to other vertices by a factor $S/d_{in}(v)$. To avoid fractional weights,  the weight of the extra edges from sources to $v$ is set to 1, the weight of the original edges from sources to other vertices is set to $S/d_{in}(v)$, and $d_{in}(v)$ is sampled so that the previous ratio is an integer. The rest of the edges have a weight of 1.
}

\begin{figure}
	\centering
	\subfigure[$\alpha$=-1, H-score=0.0]
    {
        \includegraphics[scale=0.3]{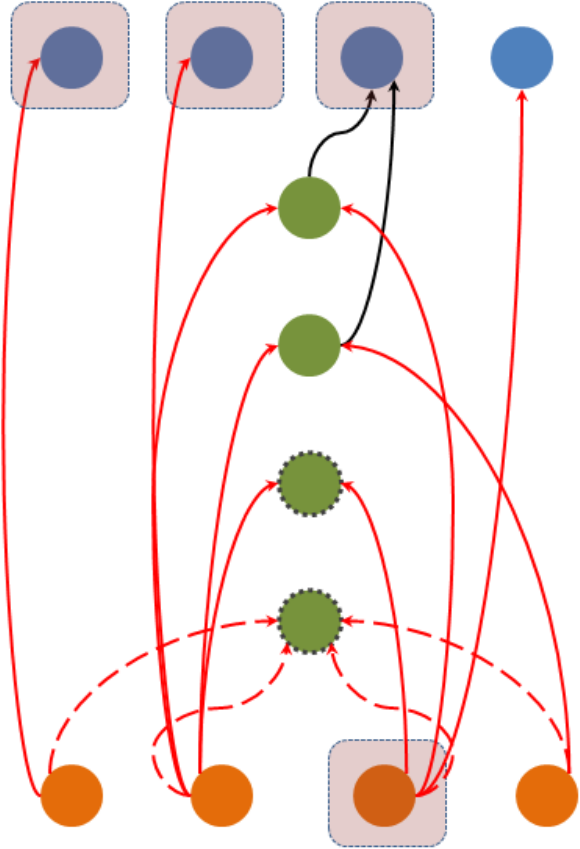}
    }
	\hspace{10.0mm}
	\subfigure[$\alpha$=0, H-score=0.25]
    {
        \includegraphics[scale=0.3]{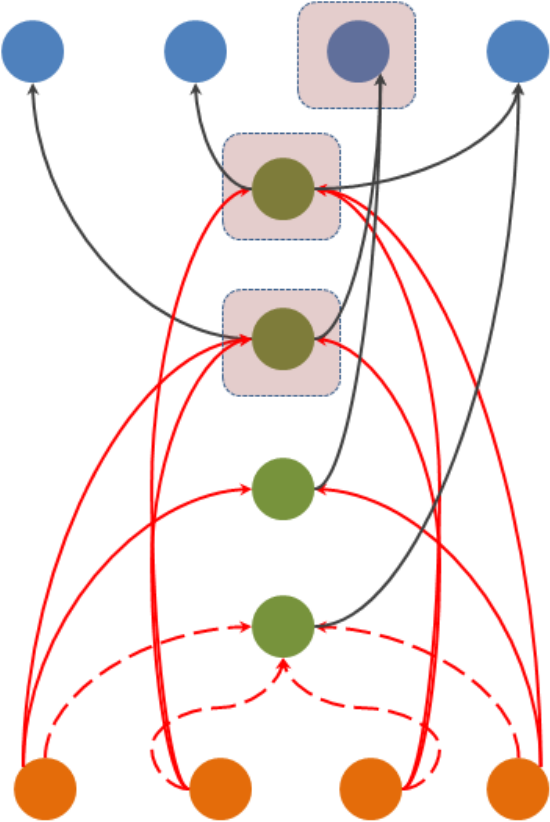}
    }
	\hspace{10.0mm}
    \subfigure[$\alpha$=+1, H-score=0.5]
    {
        \includegraphics[scale=0.3]{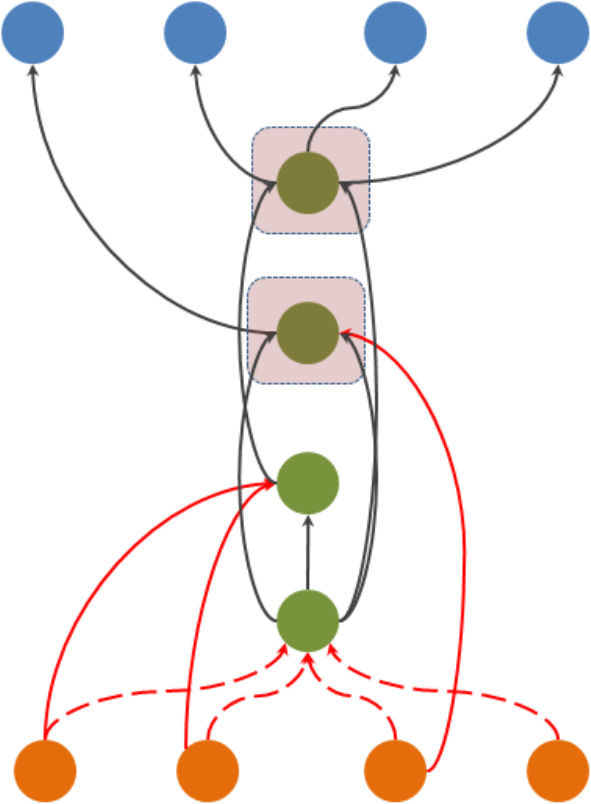}
    }
	\caption{\textcolor{blue}{Three dependency networks generated by the RP-model for different values of $\alpha$ ($V$=12, 
$S$=$T$=$M$=4, $d_{in}$=$1+\mbox{Poisson}(1)$, $\alpha$=$\{-1,0,1\}$, and $\tau$=0.90. 
The sources are shown in orange, the targets in blue, and the intermediates in green. 
Vertices that do not belong to any ST-path are shown as dotted. The edges from all sources to the first intermediate vertex $v$, shown with red dashed arrows, have unit weights. The weight of edges from sources to other intermediate and target vertices, shown with red-solid arrows, is increased to $S/d_{in}(v)$. The remaining edges, shown with solid black arrows, have unit weights. The core vertices 
for each network are shown in boxes.}} 
  \label{fig:synthetic-alpha}
\end{figure}

Figure~\ref{fig:synthetic-alpha} shows three small dependency networks constructed using the RP-model
for three different values of $\alpha$. When $\alpha=-1$, almost all ST-paths are directly connecting 
sources to targets (little reuse of intermediate vertices), 
and most intermediate vertices are not used in the construction of any target (shown as dotted).
The core consists of a combination of sources and targets, and it is relatively large (in this example, equal
to the number of sources or targets).
On the other hand, when $\alpha=+1$, 
the preference to connect to higher complexity vertices leads to longer dependency paths.
A small number of intermediate vertices are traversed by a large fraction of ST-paths, just based on chance, 
and so those vertices end up with much higher path centrality than most other vertices.
The core of such dependency networks is then small, relative to the number of sources or targets, and 
those networks have high H-score.

In the following, we illustrate the behavior of the RP-model with computational experiments.
All networks have $V$=1000 vertices but we vary the proportion of sources, targets and intermediate vertices.
The path coverage threshold $\tau$ is set to 90\%, unless stated otherwise.  
The in-degree of each vertex is either constant (denoted as ``$d_{in}$=Const(x)'') 
or set to $1+\mbox{Poisson}(x)$ where $x$ is the mean of a Poisson distribution
(denoted as ``$d_{in}$=1+P(x)''). 
All results are based on 100 simulation runs, and they are reported with 95\% confidence intervals.

\begin{figure}
	\centering
	\subfigure[Core size]
    {
				\includegraphics[scale=0.48]{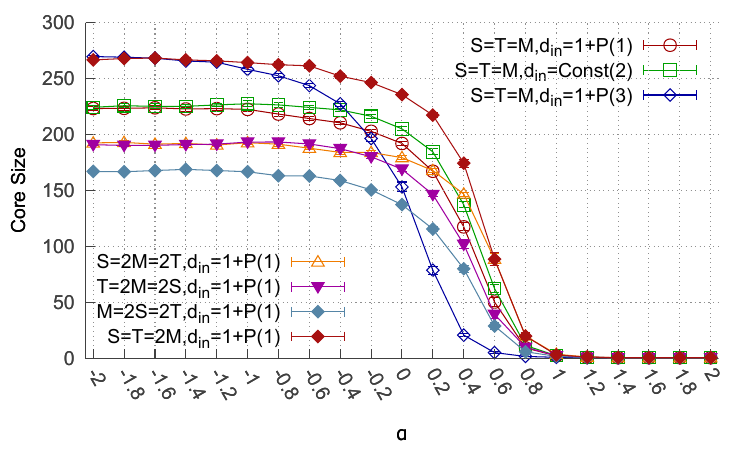}
    }
    \subfigure[H-score]
    {
				\includegraphics[scale=0.48]{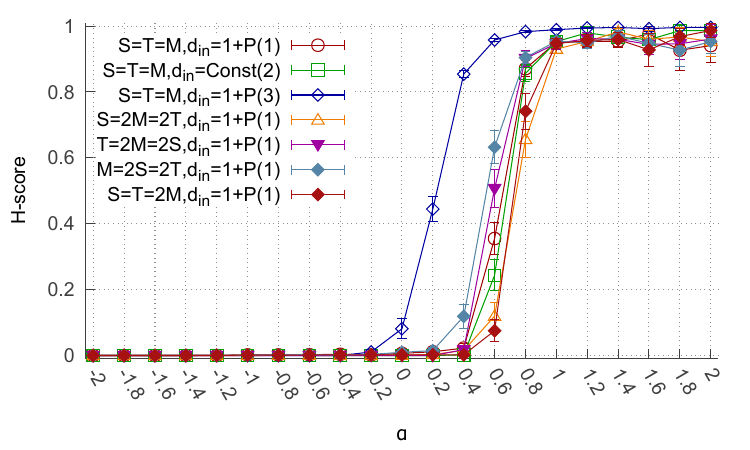}
    }
	\caption{Effect of $\alpha$ on the core size and H-score metric.}
    \label{fig:model-alpha-effect-1}
\end{figure}

\begin{figure}
	\centering
	\subfigure[Core vertex coverage]
    {
        \includegraphics[scale=0.48]{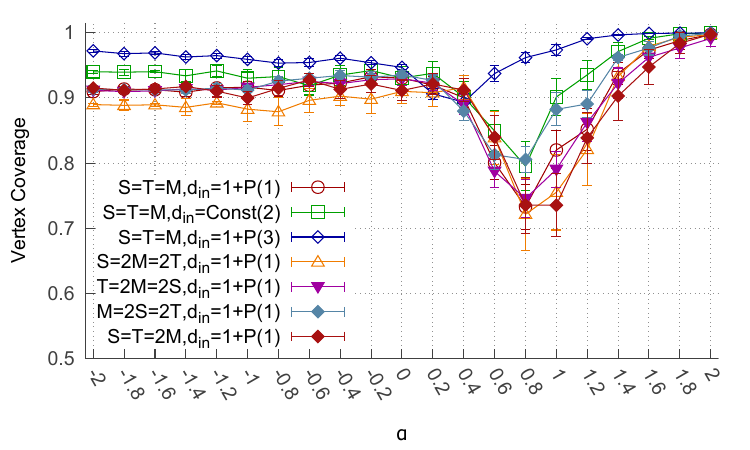}
    }
    \subfigure[Average core location]
    {
        \includegraphics[scale=0.48]{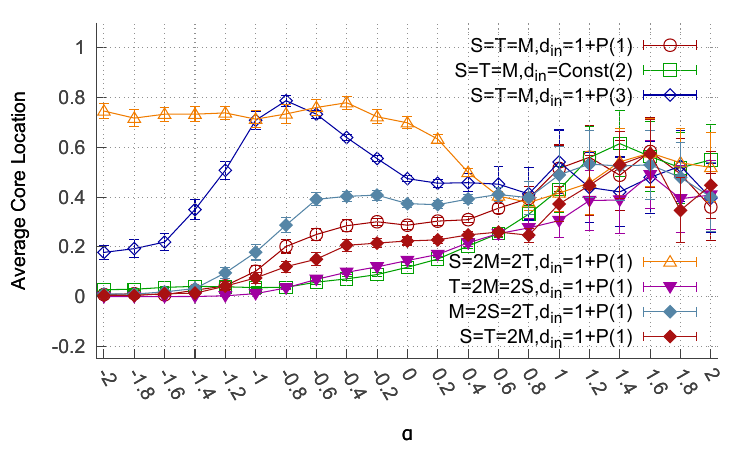}
    }
	\caption{Effect of $\alpha$ on the core vertex coverage and average core location metrics.}
    \label{fig:model-alpha-effect-2}
\end{figure}

Figures~\ref{fig:model-alpha-effect-1} and \ref{fig:model-alpha-effect-2} 
show the effect of the reuse parameter $\alpha$ on the core size $C(\tau)$, 
the H-score $H(\tau)$, the core vertex coverage $U_{\bf C}$, and the average core location $L_{\bf C}$. 
Each graph shows results for seven sets of network parameters, varying the proportion
of sources, targets, intermediates, and the in-degree values and distribution.
For example, the label $S=2M=2T$ means that $S$=500 and $M$=$T$=250 (so that $V$=1000). 

Let us first focus on negative values of $\alpha$:\\
~\\
a) As $\alpha$ decreases below zero, it becomes more likely that targets connect directly to sources
(see the ``direct'' network of Figure~\ref{fig:dependency-dag-shapes}-b or Figure~\ref{fig:synthetic-alpha}-a).
Most intermediate vertices are not included in any ST-path, their path centrality
is close to zero, and so they are not included in the core.  
Instead, the core consists of mostly a combination of sources and targets. 
To cover the large fraction $\tau$ (90\%) of these direct ST-paths however, 
the core needs to include many vertices. 
For instance, in the scenario $M$=$2S$=$2T$ the core has about 160 vertices, while $\min\{S,T\}$=250.
The higher the average in-degree is, the larger the core needs to be (to cover the increased number of ST-paths). \\
~\\
b) The corresponding flat dependency network is similar to the original network in terms
of how sources and targets are directly connected, and so it has approximately the same core size;
this is why the H-score is close to zero. \\
~\\
c) The core vertex coverage is close to one for the following reason: 
if all ST-paths are direct connections between sources and targets
and the core covers a fraction $\tau$ of these paths, 
the core vertex coverage will be at least $1-2(1-\tau)$ because 
every non-covered ST-path contributes at most two non-covered vertices.\\
~\\
d) The location of the core varies significantly with the network parameters because 
the core consists of mostly sources and targets.
So, if the core consists mostly of sources (as in the $T$=$2M$=$2S$ scenario) the core location moves closer to zero,
while if the core includes mostly targets (as in the $S$=$2M$=$2T$ scenario) the core location moves closer to one. 

Let us now focus on positive values of $\alpha$:\\
~\\
a) As $\alpha$ increases above zero, each target or intermediate vertex prefers to connect to vertices
that are close to it in the given hierarchy (see Figure~\ref{fig:synthetic-alpha}-c). 
So, the ST-paths become longer and some intermediate vertices get to be traversed by a larger
fraction of ST-paths (just based on chance). 
Vertices with high path centrality tend to form the core of the dependency network,
and their number gradually drops as $\alpha$ increases. \\
~\\
b) The core of the flat network, on the other hand, is much larger, as in the case
of negative $\alpha$, and so the corresponding H-score approaches its maximum value (one) as $\alpha$ increases.\\ 
 The transition point, from $H(\tau)\approx 0$ to $H(\tau)\approx 1$, shifts towards lower values of $\alpha$
as the density of the network increases (see scenario $d_{in}$=1+P(3)) because the likelihood 
that few intermediate vertices will acquire much higher path centrality increases.\\
~\\
c) The core vertex coverage curves follow an interesting pattern: 
as $\alpha$ increases from negative values to positive values, $U_{\bf C}$ first decreases and then increases.  
During the transition from a flat network ($H(\tau)\approx 0$) to an hourglass-like network ($H(\tau)\approx 1$), 
it is common for ST-paths to traverse one or more intermediate vertices that are not traversed by 
many other ST-paths (see the ``decoupled'' network of Figure~\ref{fig:dependency-dag-shapes}-c). 
So, in that transition range, the fraction $1-\tau$ of ST-paths that are not covered by the core 
account for more than $2(1-\tau)$ non-covered vertices (because they include one or more intermediate vertices).
As $\alpha$ further increases, the core is traversed by an increasing fraction of ST-paths, eventually covering   
almost all ST-paths, and so also covering almost all vertices that appear in ST-paths. \\
~\\
d) The location of the hourglass waist is gradually converging towards the middle of 
the dependency network, i.e., $L_{\bf C} \approx 0.5$. We should note that the location of a PES is, 
by definition, equal to the median location of the vertices in that set. So, one reason that the location
of the waist converges to 0.5 as $\alpha$ increases is that the waist in that regime often includes a large PES
with many intermediate vertices that have locations between 0 and 1.

\begin{figure}
	\centering
	\subfigure[$d_{in}=1+\mbox{Poisson}(2)$]
    {
        \includegraphics[scale=0.48]{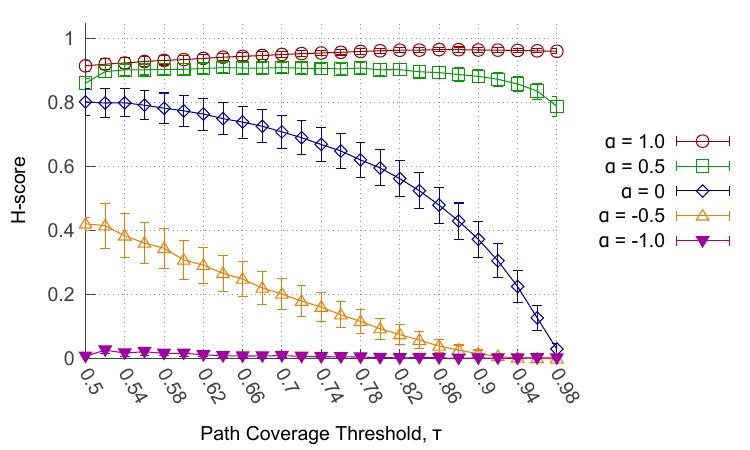}
    }
	\subfigure[$d_{in}=1+\mbox{Poisson}(3)$]
    {
        \includegraphics[scale=0.48]{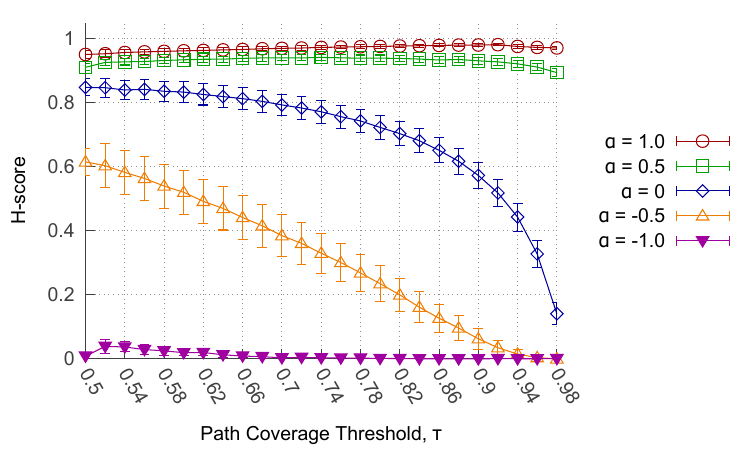}
    }
	\caption{Effect of path coverage threshold $\tau$ on H-score, for different values of $\alpha$.
Network parameters: $S$=$T$=200 and $M$=600 ($V$=1000).}
  \label{fig:tau-effect-model}
\end{figure}

Finally, Figure~\ref{fig:tau-effect-model} shows the effect of the path coverage threshold $\tau$ on the H-score
for few different values of $\alpha$.
When the reuse parameter $\alpha$ is close to one (or higher), 
the H-score is almost one, largely independent of $\tau$,
meaning that the hourglass property is robustly established.\footnote{The slight increase of the H-score with
$\tau$, when $\alpha$=1, is because the core size of the flat network increases faster than the core
size of the original network, as $\tau$ increases.}
When $\alpha$ is negative or even close to zero, on the other hand, the H-score is typically less than 50\% 
and so those networks clearly do not have the hourglass property, independent of the selection of $\tau$. 
For intermediate values of $\alpha$, the H-score depends on the selection of $\tau$ and on other network 
parameters, such as the average in-degree.

\subsection{Fitting the RP-model to a given dependency network} \label{sec:model-fit-real}

We now describe how to parameterize the RP-model so that it produces random networks $G$ that 
have approximately the same H-score with a given dependency network $G'$.
We also compare these synthetic networks with $G'$ in terms of the path
centrality distribution, the out-degree distribution, and some more network metrics that are relevant to 
dependency networks. 

Given $G'$, we can easily identify its set of sources and targets.
A synthetic network $G$ will have the same set of sources, targets, and intermediate vertices.
Since it may not be possible to identify a global ordering between the intermediate vertices, 
we place the vertices of $G$ in layers based on the topological sorting of $G'$, as follows.
First, all sources of $G'$ are placed at layer-0 of $G$. 
Then, recursively, we place at layer $i$ of $G$ those intermediate vertices of $G'$  
that depend on at least one vertex of layer $i-1$ (for $i>0$).
Finally, the targets of $G'$ are placed at the top layer of $G$ (independent of the layer of their incoming edges). 
This layered representation of $G$ gives a partial ordering relation 
between vertices: the vertices of layer $i$ are supposed to arrive (or to be created) as a batch,
and they do not depend on each other. 

The in-degree $d_{in}(v)$ of each non-source vertex $v$ in $G$ is the same with $G'$.  
To generate the specific inputs of $v$, we identify the set of ancestors $A(v)$ of $v$ in $G'$ 
-- $v$ depends directly or indirectly on these vertices.
When a vertex $v$ is created at layer $i$, it can receive incoming edges only from vertices in $A(v)$.
The selection of inputs of $v$ among the vertices in $A(v)$ is performed probabilistically based on 
Equation~\ref{eq:zipf}. 
The only difference with the original RP-model is that vertices of $A(v)$ that belong to the same
layer have the same rank, and so the same probability of being connected to $v$.\footnote{In the original RP-model, vertices are created sequentially
and so each layer (other than the boundary layers of sources and targets) includes only one vertex.}  

To parameterize the RP-model, we estimate the value of the reuse preference exponent $\alpha$ so
that the synthetic networks $G$ have an H-score that is approximately the same with that of $G'$. 
To do so, we generate 100 synthetic networks $G$ for each value of $\alpha$ and compute the average H-score
of that sample -- the optimal value of $\alpha$ is the value that gives the minimum difference from the 
H-score of $G'$. 

\begin{table}
\scriptsize
\centering
\begin{adjustbox}{width=1\textwidth}
\begin{tabular*}{\textwidth}{l @{\extracolsep{\fill}} cc@{\hskip 0.10in}cc@{\hskip 0.10in}cc}
                \multirow{5}{*}{\parbox{3cm}{\centering ~ }} & \multicolumn{6}{c}{Networks} \\
                \cline{2-7}
                & \multicolumn{2}{c}{Software Call-graphs} & \multicolumn{2}{c}{Metabolic Nets} & \multicolumn{2}{c}{SCotUS Citation Nets} \\
                & \textit{OpenSSH} & \textit{Apache Math} & \textit{Rat} & \textit{Monkey} & \textit{Abortion} & \textit{Pension} \\
                & \textit{v-5.2} & \textit{v-3.4} & & & \textit{Cases} & \textit{Cases} \\
                \hline \hline
                 $\alpha$ estimate       & 1.1           & 2.3            & 2.4           & 2.5           & 2.7            & 2.3 \\
																		     \midrule
	      \multirow{2}{*}{Core size}       & $3\pm1$       & $4\pm1$        & $4\pm4$       & $4\pm2$       & $5\pm0.5$      & $8\pm1$ \\
															           & (3)           & (9)            & (7)           & (8)           & (4)            & (11) \\
																		     \midrule
				\multirow{2}{*}{H-score}         & $0.69\pm0.05$ & $0.78\pm0.03$  & $0.76\pm0.07$ & $0.75\pm0.04$ & $0.78\pm0.02$ & $0.87\pm0.01$ \\
															           & (0.77)        & (0.75)         & (0.82)        & (0.81)        & (0.86)         & (0.89) \\
																		     \midrule
   \multirow{2}{*}{ST-path length}       & $8.8\pm0.8$   & $9.5\pm0.4$    & $11.3\pm3.5$  & $12.3\pm3.8$  & $14.3\pm0.5$   & $6.4\pm0.3$ \\
																	       & (10.4)        & (8.8)          & (8.3)         & (8.1)         & (14.1)         & (5.1) \\
																		     \midrule
	\multirow{2}{*}{Core vertex coverage}  & $0.28\pm0.04$ & $0.12\pm0.01$  & $0.32\pm0.07$ & $0.3\pm0.07$  & $0.85\pm0.1$   & $0.45\pm0.05$ \\
																		     & (0.35)        & (0.21)         & (0.53)        & (0.57)        & (0.82)         & (0.48) \\
																		     \midrule
  \multirow{2}{*}{Average core location} & $0.51\pm0.3$  & $0.05\pm0.02$  & $0.36\pm0.12$ & $0.3\pm0.12$  & $0.2\pm0.15$   & $0.29\pm0.18$ \\
																				 & (0.50)        & (0.12)         & (0.45)        & (0.44)        & (0.74)         & (0.24) \\
                \hline
\end{tabular*}
\end{adjustbox}
\caption{Fitting the RP-model to the six dependency networks of Section~\ref{sec:realnets}:
we show the $\alpha$ estimate, 
the average ST-path length, 
the core size (for the same value of $\tau$ as in the analysis of the original networks -- see Table~2), 
the core vertex coverage $U_{\bf C}$, 
and the average core location $L_{\bf C}$. 
The corresponding values for the original dependency networks are shown in parentheses.}
\label{tab:real-model-properties-2}
\end{table}

Table~\ref{tab:real-model-properties-2} shows the estimate of $\alpha$ for each dependency 
network of Section~\ref{sec:realnets}.  
The average H-score of those synthetic networks is within $10\%$ of the H-score of $G'$.

The RP-model generates ST-paths with a similar average
length as in the given dependency networks. The average length of the dependency paths is an important metric,
as it represents the typical number of intermediate vertices between a source and a target.
The synthetic networks are often similar with the given dependency networks in terms of the 
core vertex coverage and the average core location but there are also some significant 
deviations (the model overestimates the core vertex coverage of the call graphs and metabolic networks, and 
it does not predict correctly the location of the core of the SCOTUS abortion cases network).

\begin{figure}
	\centering
	\subfigure[Path centrality distribution]
    {
        \includegraphics[scale=0.48]{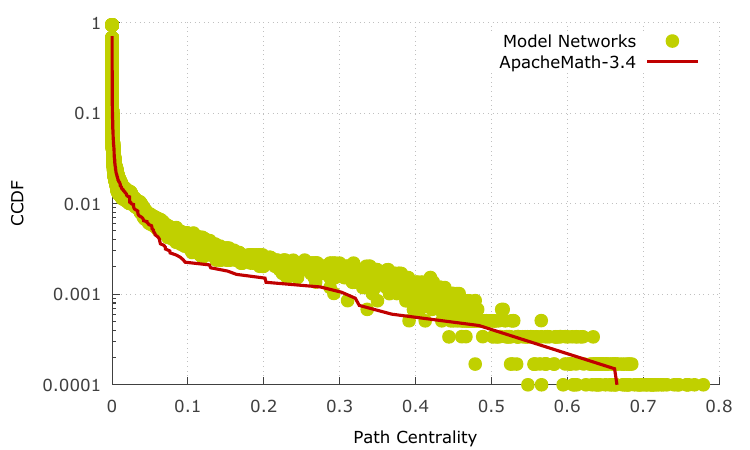}
    }
    \subfigure[Out-degree distribution]
    {
        \includegraphics[scale=0.48]{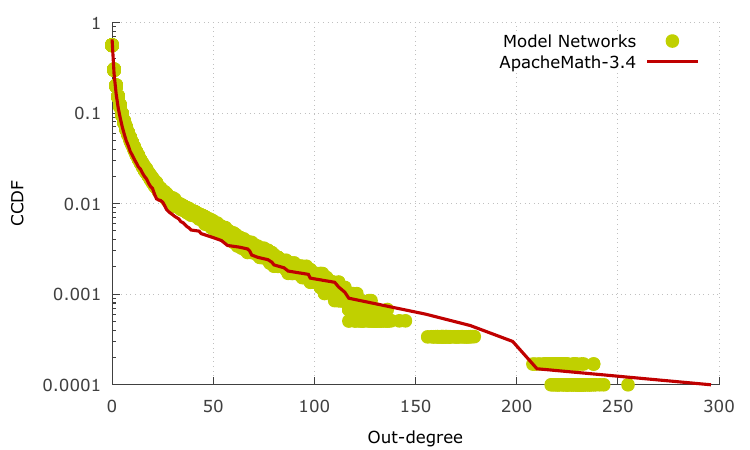}
    }
	\caption{Comparing path centrality and out degree distribution of a real network with model generated synthetic networks.}
    \label{fig:real-model-distribution-javamath}
\end{figure}

Figure~\ref{fig:real-model-distribution-javamath} shows the path centrality and the out-degree distributions
for the Apache Math call-graph and for an ensemble of 100 synthetically generated networks by the RP-model, 
as described earlier.  
Similar results for the five other dependency networks are shown at the Supplementary Material
(see Figure~\ref{fig:real-model-distribution}).
Even though there is significant variability between members of the ensemble (both distributions
are highly skewed), the model is able to generate distributions of path centrality and out-degree that 
encompass the main mass of the empirical distributions of $G'$.

\subsection{Comparison with another dependency network model} \label{sec:kleinberg}
We are not aware of any other model that can generate hourglass dependency networks.
However, there is a well-known class of models that can generate growing dependency networks based on 
variations of the ``edge-copying'' mechanism \cite{kleinberg1999web, krapivsky2005network}.
The simplest instance of the edge-copying 
model is: a new vertex $v$ depends with probability $\beta$ on a randomly
chosen vertex $u$, and with probability $1-\beta$ on a randomly chosen vertex $w$ that $u$ depends on,
i.e., $v$ copies an incoming edge of $u$ \cite{kleinberg1999web}.
If these dependencies are represented with directed edges from $u$ (or $w$) to $v$, 
the out-degree distribution follows a power-law 
with exponent $-\frac{2-\beta}{1-\beta}$ \cite{kumar2000web}.
For $\beta < 1/2$, the edge-copying model generates  scale-free networks and some some vertices are
expected to be hubs.  
An important question is: can the edge-copying model generate hourglass dependency networks,
at least for some values of $\beta$? And if so, is it that the hubs appear at the waist of the hourglass
network? 

We follow the same process as in Section~\ref{sec:model-fit-real} to fit the edge-copying model 
in a given dependency network, i.e.,  
the number of sources, targets and intermediate vertices, 
the (partial) ordering with which the vertices are created, 
and the in-degree of each vertex are as in the given dependency network.  
One special case that we need to address is: 
what if a vertex $v$ selects to copy (with probability $1-\beta$) an incoming edge of a source $u$?
Since sources do not have incoming edges, we assume that $v$ should receive an incoming edge from $u$ instead.
Also, we do not allow multi-edges. 

Figure~\ref{fig:kleinberg-hscore} shows the results of fitting the edge-copying model in the OpenSSH call-graph, Rat metabolic network and Abortion cases citation network: the y-axis shows the H-score (average and 95\% confidence interval) of 100 synthetic networks generated
for different values of the parameter $\beta$.  
Note that the H-score is close to zero throughout the range of $\beta$, meaning that the edge-copying model
is {\em not} able to generate hourglass networks. 
As $\beta$ approaches one, each new vertex depends on randomly chosen existing vertices -- which is also 
what happens in the RP-model when $\alpha=0$; we have already seen that such networks do not exhibit the 
hourglass effect.  
As $\beta$ approaches zero, the edge-copying mechanism is applied more often and this causes 
the emergence of hubs. These hubs, however, tend to be sources because the latter are created first, and so 
their number of outgoing edges increases faster than other vertices \cite{barabasi1999emergence}. 
As a result, most targets are connected directly to sources generating dependency
networks with very short ST-paths, a large fraction of disconnected intermediate vertices, and a core
that consists of almost all source vertices -- consequently, the H-score of such networks is close to zero. 

Comparing the RP-model with the edge-copying model, we note that the former is able to generate hourglass
networks, when $\alpha$ is close to one or higher, because the preference to connect to higher complexity vertices 
leads to longer dependency paths, and thus to the emergence of few intermediate vertices with much higher path 
centrality. 
 
\begin{figure}
	\centering
  \includegraphics[scale=0.5]{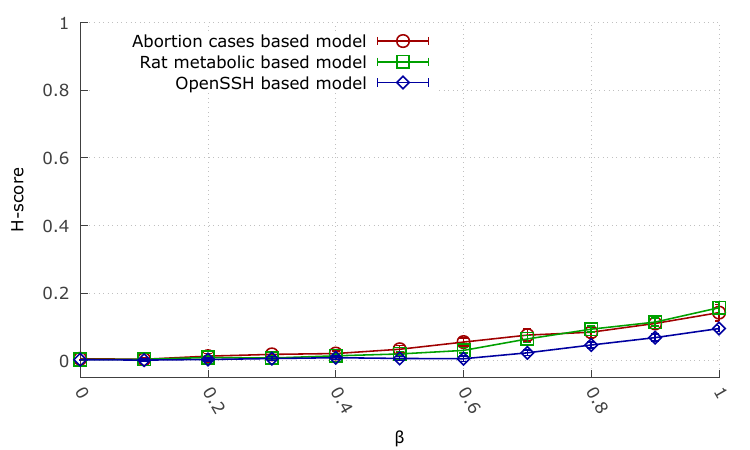}  
	\caption{H-score of synthetic networks generated by the edge-copying model. The network parameters 
(number of sources, targets, partial ordering of vertices, and in-degree of each vertex) are set based on
the three empirical dependency networks shown in the legend. } 
  \label{fig:kleinberg-hscore}
\end{figure}

\subsection{Run-time analysis of core identification algorithm} \label{sec:runtime}
We can also use the RP-model to examine the scalability of the core identification algorithm.
We created synthetic dependency networks of different sizes, for three different values of $\alpha$ (-0.5, 0, 0.5). 
The proportion of sources and targets remains constant (25\% each), while the in-degree of each non-source vertex
is $1+\mbox{Poisson}(2)$.

As discussed in Section~\ref{sec:core}, the core identification greedy algorithm has a run-time complexity
of $O(k \, E)$, where $k$ is the size of the core and $E$ is the number of network edges. 
In the dependency networks we construct, $E$ increases proportionally with the number of vertices $N$, 
i.e., $E= d_{in} \, (N-S)$, where $d_{in}$ is the average in-degree for non-source vertices and $S$ is the
number of sources.
The relation between $k$ and $N$ is not something we could derive analytically, and it certainly depends on $\alpha$
and the path coverage threshold $\tau$.  

Figure~\ref{fig:run-time} shows the run-time, the run-time per core vertex, and the core size $k$ as a 
function of $N$, for $\tau$=0.90. 
Note that $k$ increases almost linearly with $N$ for all values of $\alpha$ we consider. 
Consequently, the total run-time becomes the product of two linear functions of $N$, and so 
it {\em increases quadratically with the network size.}  
As expected, non-hourglass networks (e.g., when  $\alpha=-0.5$) have a larger core, and so 
they require more computation than hourglass networks. 

\begin{figure}
	\centering
	\subfigure[Run-time]
    {
        \includegraphics[scale=0.48]{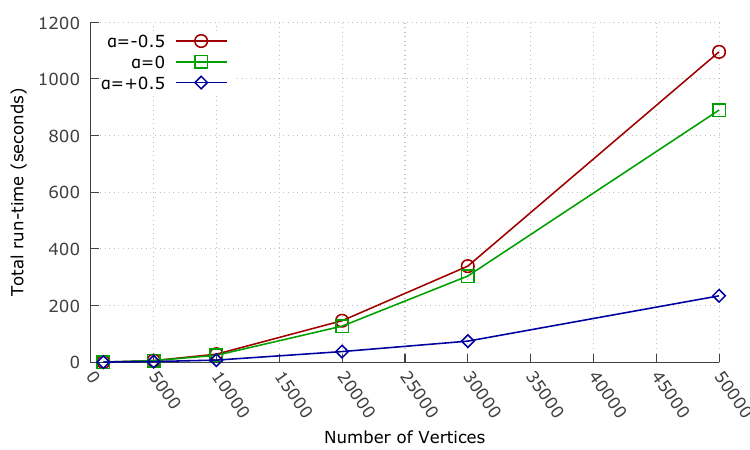}
    }
    \subfigure[Core-size and run-time per core vertex]
    {
        \includegraphics[scale=0.48]{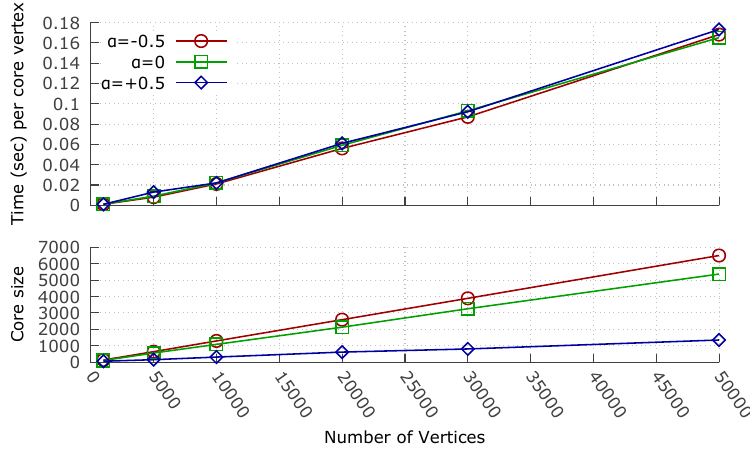}
    }
	\caption{Run-time analysis of the core identification algorithm using networks generated with the RP-model. 
The run-time increases quadratically with the network size $N$. 
The experiments were run on an Intel-2.5GHz dual-core processor with $6$GB of memory.}
  \label{fig:run-time}
\end{figure}

%

\section{Related work} \label{sec:rlwork}
The terms ``hourglass'' and ``bow-tie'' are often mentioned informally in 
the network science literature and in other disciplines -- their precise
meaning and whether the two terms are synonymous is not always clear however.

The term bow-tie, in particular, always refers to directed (but not necessarily acyclic) networks.   
It first appeared in the context of the WWW graph, after the 2000 
study of Broder et al. \cite{broder2000graph}. 
The ``knot'' of that bow-tie was described as the largest 
Strongly Connected Component (SCC) in the graph, which included about 25\% of the network's vertices. 
Similarly, the term bow-tie has been also used in the context of metabolic networks
\cite{ma2003connectivity,zhao2006hierarchical}.
Since then, several directed networks have been described as bow-ties, as long as there is a central SCC 
with incoming edges from a large input component and outgoing edges to a large output component 
\cite{newman2002email,capocci2006preferential,saito2007large,vitali2011network,easley2012networks}. 
In other words, the term  ``bow-tie network'' refers mostly to a visual representation of directed networks 
based on the previous decomposition of vertices into four sets: 
an input component, a core (the largest SCC), an output component, and any other vertices 
that are not in the previous three components (referred to as ``tendrils'' and disconnected components). 
There is no requirement that the vertices in the knot of the bow-tie account for only a small 
fraction of the network size. There is also no requirement that the vertices in the knot are 
highly central, for any definition of centrality. 

Hourglass networks, on the other hand, are typically directed and acyclic graphs, and the vertices 
at the hourglass waist need to be a small fraction 
of the total number of vertices. Further, the few vertices at the waist are present in almost all 
source-target paths in the network, and so they can be thought of as functionally
very important for the underlying system \cite{akhshabi2011evolution,csete2004bow}.

The three most relevant studies about the hourglass effect 
focused on the special case of layered and acyclic directed networks in which edges
can only exist between successive layers
\cite{akhshabi2011evolution,akhshabi2014explanatory,friedlander2015evolution}.
In those studies, the hourglass effect is defined in terms of the number of vertices at each layer,
and a network is referred to as an hourglass if the width of the intermediate layers is much smaller 
relative to the width of the input and output layers.  
The first study \cite{akhshabi2011evolution} proposed an evolutionary model (called EvoArch) for the 
emergence of the hourglass effect in computer networking protocol stacks; EvoArch captures
the creation and competition between modules that perform similar functions and it may
be also applicable in other layered technological systems.
The second study \cite{akhshabi2014explanatory} made the case that 
the topological structure of developmental regulatory networks (namely that the specificity
of regulatory interactions increases during embryogenesis) is sufficient for the emergence
of the hourglass effect in that context.  
The third study \cite{friedlander2015evolution} showed that a layered and directed network 
can evolve to a bow-tie structure if the relation between inputs and outputs can be represented
with a rank-deficient matrix, and if the mutations in the intensity (weights)
of module interactions (network edges) can be modeled as products by a random number (rather than sums). 

The previous models and analysis frameworks, however, are not applicable in more general 
dependency networks.
Even if we artificially place vertices in layers based on topological sorting (i.e., sources
are placed at the bottom layer, and each vertex is placed at the lowest possible layer so that all its 
incoming edges are from vertices of lower layers), edges can traverse more than one layer,
and targets can appear at different layers. 
Additionally, a general dependency network may include cyclic dependencies and SCCs of interdependent modules. 
So, those studies do not define the hourglass property in general hierarchically modular 
systems and they do not how show how to identify their waist.

In the context of DAGs, a relevant prior study is \cite{ishakian2012framework}. 
That work had a different focus (not related to the hourglass effect or modeling hierarchical systems) 
but it considered the same centrality 
metric (referred to as \#P centrality) that we also use, and it analyzed the computational complexity
of the problem of identifying the $k$ vertices that have, collectively as a group, the largest \#P 
centrality (referred to as the {\em C$^3$MC} problem in our work).

Another relevant study is the {\em BowTieBuilder} algorithm \cite{supper2009bowtiebuilder}. 
That work examined to what extent signal transduction pathways follow the bow-tie structure
proposing a centrality metric (``bow-tie score'') 
for each protein in the network, based on the number of sources and targets that are 
connected with paths traversing that protein.  
The knot of the bow-tie was defined as the set of proteins with maximal bow-tie score.

The ``morphospace'' of all possible hierarchical networks was investigated in 
\cite{corominas2013origins}. The three dimensions of the considered morphospace in that study
are ``treeness'', ``feedforwardness'' and ``orderability''. A large number of networks, mostly 
metabolic, neuronal and language, are shown to fall in the part of the morphospace that
corresponds to hourglass or bow-tie networks.

\section{Discussion -- significance of the hourglass effect} \label{sec:discussion}

The hourglass effect is significant for several reasons. One of them is that the modules at the waist
of the network create a ``bottleneck'' in the flow of information from sources (or inputs) to targets
(or outputs). Such bottleneck network effects have been studied in the literature under different names.
For instance, the term ``core-periphery networks'' has been broadly used in network science to
refer to various static and dynamic topological properties (e.g., rich-club effect, 
onion-like networks) that result from a dense, cohesive core that
is connected to sparsely connected peripheral vertices (but not necessarily organized in an 
acyclic input-output hierarchy) \cite{borgatti2000models,csermely2013structure,rombach2014core}. 
Bottlenecks have been also observed in gene regulatory networks \cite{bhardwaj2010analysis}, 
in protein networks \cite{yu2007importance},
in general evolutionary models \cite{jain2002large}, among many other domains. 
\revision{
The methodology we have presented in this paper for the identification
of the core and for the quantification of the hourglass effect can serve as a unified approach 
for the study of bottleneck network phenomena in a wide range of disciplines. 
}

Why do so many networks in nature and
technology exhibit the hourglass effect? Is there a single underlying explanation or are there
different mechanisms through which a hierarchical network can acquire this property?
In technological networks, the reuse of existing modules has economic benefits in terms of 
design and implementation cost, and so it may be that the hourglass property results ``by design''
\cite{yan2010comparing}.
In natural networks, on the other hand, are there similar costs that an evolutionary process 
gradually reduces or should we look for a completely different explanation?
The model of \cite{friedlander2015evolution}
captures how a realistic evolutionary process searches for the network  
that results in a desired input-output (linear) transformation.
A more recent work \cite{siyari2016lexis}
proposes an optimization-based framework, modeling sources as characters and targets as strings,
that creates the given targets through the construction and reuse of intermediate substrings.  
\revision{
The proposed RP-model offers a different, probably more general explanation for the hourglass effect: 
a dependency network with multiple sources and targets exhibits the hourglass effect 
when each vertex tends to depend on vertices of similar complexity
(instead of connecting directly to sources or vertices of much lower complexity). 
This ``preference for reuse'' tends to create deep hierarchies in which a small set of
intermediate vertices is traversed by most dependency paths. 
The RP-model is probabilistic, and so it is not possible to predict which specific intermediate 
vertices will emerge at the waist.  
In practice, we expect that the vertices at the waist will correspond to modules that are both
highly general (meaning that their function is needed, directly or indirectly, by many targets)
and highly complex (meaning that to provide that function, those modules need to utilize, directly or indirectly,
the functionality of many sources).
} 

The hourglass effect is also significant for the evolvability and robustness of hierarchically modular systems.  
Intuitively, the hourglass effect should allow a system to accommodate frequent changes in its sources
or targets (i.e., to be able to evolve as the environment changes) because the few modules at the waist 
``decouple'' the large number of sources from the large number of targets. If there is a change in the 
inputs (sources), the outputs do not need to be modified as long as the modules at the waist can 
still function properly. Similarly, if there is need for a new target, it may be much easier (or cheaper)
to construct it reusing the modules at the waist rather than directly relying on sources.
This is related to the notion of ``constraints that de-constrain'', introduced by Kirschner and Gerhart in 
the context of biological development and evolvability  \cite{kirschner1998evolvability}.  
At the same time however, the presence of these critical modules at the waist (the ``constraints'')
limit the space of all possible outputs that the system can generate (``phenotype space''), at least for a 
given maximum cost. The mechanisms through which the hourglass effect can improve evolvability but also 
limit the phenotype space is an important issue not only for natural
systems but also for evolving technological systems \cite{rexford2010future}.  

Finally, understanding the implications of the hourglass effect for the cost, robustness,
and evolvability of designed or technological systems can also have significant practical applications. 
In engineering, the primary focus is typically on optimality rather than on evolvability or robustness
(e.g., design the minimum cost electronic circuit that can perform a given logic function). 
Such system-wide cost minimizations may appear attractive at first but they typically lead to non-hierarchical
(monolithic) designs that are hard to test, evolve, or operate in the presence of failures. 
On the other hand, hierarchical design often lacks a systematic framework and the tools that would
allow the designer to automatically identify, given a set of inputs and a set of outputs, the intermediate
modules that would be most reusable.
This becomes an even harder problem when we consider that most technological systems need to evolve
as the desired functionalities (outputs) and conditions (inputs) often change over time. 
One approach, which has not been pursued so far to the extent of our knowledge, is to start the design 
process from the waist, rather than bottom-up or top-down: first design a relatively small number of 
modules of intermediate complexity that will form the waist of the dependency network. Then, construct 
these modules based on the inputs, and in parallel construct the outputs based on these modules at the waist.  
Of course the key challenge in this approach is to develop algorithms and tools that can automatically
identify those few central building blocks that will form the hourglass waist
from the system specifications.

\section*{Acknowledgment}
This research was supported by the National Science Foundation (NSF award CNS-1319549). We are also grateful to Saamer Akhshabi (Georgia Tech), Payam Siyari (Georgia Tech), Mathieu Nassif (McGill), Prof. Bistra Dilkina (Georgia Tech) and Prof. Martin Robillard (McGill) for their valuable help and input. We are also grateful to the anonymous reviewers for their thoughtful and constructive comments.

Dedicated to the memory of Saamer Akhshabi.

\bibliographystyle{plain}
\bibliography{hourglass}

\begin{thebibliography}{10}

\bibitem{akhshabi2011evolution}
Saamer Akhshabi and Constantine Dovrolis.
\newblock The evolution of layered protocol stacks leads to an hourglass-shaped
  architecture.
\newblock In {\em {ACM SIGCOMM Computer Communication Review}}, volume~41,
  pages 206--217. ACM, 2011.

\bibitem{akhshabi2014explanatory}
Saamer Akhshabi, Shrutii Sarda, Constantine Dovrolis, and Soojin Yi.
\newblock An explanatory evo-devo model for the developmental hourglass.
\newblock {\em F1000Research}, 3(156), 2014.

\bibitem{cellbiology}
B~Alberts, A~Johnson, J~Lewis, M~Raff, K~Roberts, and P~Walter.
\newblock {\em {Molecular Biology of the Cell}}.
\newblock Garland Science, 4th edition, 2002.

\bibitem{baldwin2000design}
Carliss~Young Baldwin and Kim~B Clark.
\newblock {\em {Design Rules: The Power of Modularity}}, volume~1.
\newblock MIT press, 2000.

\bibitem{barabasi1999emergence}
Albert-L{\'a}szl{\'o} Barab{\'a}si and R{\'e}ka Albert.
\newblock Emergence of scaling in random networks.
\newblock {\em science}, 286(5439):509--512, 1999.

\bibitem{beutler2004inferences}
Bruce Beutler.
\newblock Inferences, questions and possibilities in toll-like receptor
  signalling.
\newblock {\em Nature}, 430(6996):257--263, 2004.

\bibitem{bhardwaj2010analysis}
Nitin Bhardwaj, Koon-Kiu Yan, and Mark~B Gerstein.
\newblock Analysis of diverse regulatory networks in a hierarchical context
  shows consistent tendencies for collaboration in the middle levels.
\newblock {\em {Proceedings of the National Academy of Sciences}},
  107(15):6841--6846, 2010.

\bibitem{bhattacharya2012graph}
Pamela Bhattacharya, Marios Iliofotou, Iulian Neamtiu, and Michalis Faloutsos.
\newblock Graph-based analysis and prediction for software evolution.
\newblock In {\em {Proceedings of the 2012 International Conference on Software
  Engineering}}, pages 419--429. IEEE Press, 2012.

\bibitem{borgatti2000models}
Stephen~P Borgatti and Martin~G Everett.
\newblock Models of core/periphery structures.
\newblock {\em {Social networks}}, 21(4):375--395, 2000.

\bibitem{broder2000graph}
Andrei Broder, Ravi Kumar, Farzin Maghoul, Prabhakar Raghavan, Sridhar
  Rajagopalan, Raymie Stata, Andrew Tomkins, and Janet Wiener.
\newblock Graph structure in the web.
\newblock {\em {Computer Networks}}, 33(1):309--320, 2000.

\bibitem{callebaut2005modularity}
Werner Callebaut and Diego Rasskin-Gutman.
\newblock {\em {Modularity: Understanding the Development and Evolution of
  Natural Complex Systems}}.
\newblock MIT press, 2005.

\bibitem{capocci2006preferential}
Andrea Capocci, Vito~DP Servedio, Francesca Colaiori, Luciana~S Buriol, Debora
  Donato, Stefano Leonardi, and Guido Caldarelli.
\newblock {Preferential attachment in the growth of social networks: The
  Internet encyclopedia Wikipedia}.
\newblock {\em {Physical Review E}}, 74(3):036116, 2006.

\bibitem{casci2011development}
Tanita Casci.
\newblock Development: Hourglass theory gets molecular approval.
\newblock {\em {Nature Reviews Genetics}}, 12(2):76--76, 2011.

\bibitem{clune2013evolutionary}
Jeff Clune, Jean-Baptiste Mouret, and Hod Lipson.
\newblock The evolutionary origins of modularity.
\newblock {\em {Proceedings of the Royal Society of London B: Biological
  Sciences}}, 280(1755):20122863, 2013.

\bibitem{corominas2013origins}
Bernat Corominas-Murtra, Joaqu{\'\i}n Go{\~n}i, Ricard~V Sol{\'e}, and Carlos
  Rodr{\'\i}guez-Caso.
\newblock On the origins of hierarchy in complex networks.
\newblock {\em {Proceedings of the National Academy of Sciences}},
  110(33):13316--13321, 2013.

\bibitem{csermely2013structure}
Peter Csermely, Andr{\'a}s London, Ling-Yun Wu, and Brian Uzzi.
\newblock Structure and dynamics of core/periphery networks.
\newblock {\em {Journal of Complex Networks}}, 1(2):93--123, 2013.

\bibitem{csete2004bow}
Marie Csete and John Doyle.
\newblock Bow ties, metabolism and disease.
\newblock {\em {TRENDS in Biotechnology}}, 22(9):446--450, 2004.

\bibitem{domazet2010phylogenetically}
Tomislav Domazet-Lo{\v{s}}o and Diethard Tautz.
\newblock A phylogenetically based transcriptome age index mirrors ontogenetic
  divergence patterns.
\newblock {\em Nature}, 468(7325):815--818, 2010.

\bibitem{easley2012networks}
David Easley and Jon Kleinberg.
\newblock {Networks, Crowds, and Markets}, 2012.

\bibitem{fortuna2011evolution}
Miguel~A Fortuna, Juan~A Bonachela, and Simon~A Levin.
\newblock Evolution of a modular software network.
\newblock {\em {Proceedings of the National Academy of Sciences}},
  108(50):19985--19989, 2011.

\bibitem{fowler2008authority}
James~H Fowler and Sangick Jeon.
\newblock The authority of supreme court precedent.
\newblock {\em {Social Networks}}, 30(1):16--30, 2008.

\bibitem{fowler2007network}
James~H Fowler, Timothy~R Johnson, James~F Spriggs, Sangick Jeon, and Paul~J
  Wahlbeck.
\newblock Network analysis and the law: Measuring the legal importance of
  precedents at the us supreme court.
\newblock {\em {Political Analysis}}, 15(3):324--346, 2007.

\bibitem{friedlander2015evolution}
Tamar Friedlander, Avraham~E Mayo, Tsvi Tlusty, and Uri Alon.
\newblock Evolution of bow-tie architectures in biology.
\newblock {\em {PLoS Comput Biol}}, 11(3):e1004055, 2015.

\bibitem{codeviz}
Mel Gorman.
\newblock Codeviz: A callgraph visualiser.
\newblock \url{http://www.csn.ul.ie/~mel/projects/codeviz/}, 2015.

\bibitem{java-callgraph}
Georgios Gousios.
\newblock java-callgraph: Java call graph utilities.
\newblock \url{https://github.com/gousiosg/java-callgraph}, 2015.

\bibitem{hinton2006reducing}
Geoffrey~E Hinton and Ruslan~R Salakhutdinov.
\newblock Reducing the dimensionality of data with neural networks.
\newblock {\em Science}, 313(5786):504--507, 2006.

\bibitem{holme2005core}
Petter Holme.
\newblock Core-periphery organization of complex networks.
\newblock {\em {Physical Review E}}, 72(4):046111, 2005.

\bibitem{huang1998modularity}
Chun-Che Huang and Andrew Kusiak.
\newblock Modularity in design of products and systems.
\newblock {\em {Systems, Man and Cybernetics, Part A: Systems and Humans, IEEE
  Transactions on}}, 28(1):66--77, 1998.

\bibitem{ishakian2012framework}
Vatche Ishakian, D{\'o}ra Erd{\"o}s, Evimaria Terzi, and Azer Bestavros.
\newblock A framework for the evaluation and management of network centrality.
\newblock In {\em SDM}, pages 427--438. SIAM, 2012.

\bibitem{jain2002large}
Sanjay Jain and Sandeep Krishna.
\newblock Large extinctions in an evolutionary model: the role of innovation
  and keystone species.
\newblock {\em {Proceedings of the National Academy of Sciences}},
  99(4):2055--2060, 2002.

\bibitem{kanehisa2000kegg}
Minoru Kanehisa and Susumu Goto.
\newblock Kegg: kyoto encyclopedia of genes and genomes.
\newblock {\em {Nucleic Acids Research}}, 28(1):27--30, 2000.

\bibitem{kanehisa2014data}
Minoru Kanehisa, Susumu Goto, Yoko Sato, Masayuki Kawashima, Miho Furumichi,
  and Mao Tanabe.
\newblock Data, information, knowledge and principle: back to metabolism in
  kegg.
\newblock {\em {Nucleic Acids Research}}, 42(D1):D199--D205, 2014.

\bibitem{kashtan2005spontaneous}
Nadav Kashtan and Uri Alon.
\newblock Spontaneous evolution of modularity and network motifs.
\newblock {\em {Proceedings of the National Academy of Sciences of the United
  States of America}}, 102(39):13773--13778, 2005.

\bibitem{kashtan2007varying}
Nadav Kashtan, Elad Noor, and Uri Alon.
\newblock Varying environments can speed up evolution.
\newblock {\em {Proceedings of the National Academy of Sciences}},
  104(34):13711--13716, 2007.

\bibitem{kirschner1998evolvability}
Marc Kirschner and John Gerhart.
\newblock Evolvability.
\newblock {\em {Proceedings of the National Academy of Sciences}},
  95(15):8420--8427, 1998.

\bibitem{kirsten2011evolution}
H~Kirsten and Paulien Hogeweg.
\newblock Evolution of networks for body plan patterning; interplay of
  modularity, robustness and evolvability.
\newblock {\em {PLoS Comput Biol}}, 7(10):e1002208, 2011.

\bibitem{kitano2004biological}
Hiroaki Kitano.
\newblock Biological robustness.
\newblock {\em {Nature Reviews Genetics}}, 5(11):826--837, 2004.

\bibitem{kleinberg1999web}
Jon~M Kleinberg, Ravi Kumar, Prabhakar Raghavan, Sridhar Rajagopalan, and
  Andrew~S Tomkins.
\newblock The web as a graph: measurements, models, and methods.
\newblock In {\em International Computing and Combinatorics Conference}, pages
  1--17. Springer, 1999.

\bibitem{krapivsky2005network}
Pavel~L Krapivsky and Sidney Redner.
\newblock Network growth by copying.
\newblock {\em Physical Review E}, 71(3):036118, 2005.

\bibitem{kumar2000web}
Ravi Kumar, Prabhakar Raghavan, Sridhar Rajagopalan, Dandapani Sivakumar,
  Andrew Tompkins, and Eli Upfal.
\newblock The web as a graph.
\newblock In {\em Proceedings of the nineteenth ACM SIGMOD-SIGACT-SIGART
  symposium on Principles of database systems}, pages 1--10. ACM, 2000.

\bibitem{cornell-law}
\uppercase{C}ornell \uppercase{U}niversity \uppercase{L}aw~\uppercase{S}chool
  Legal~\uppercase{I}nformation \uppercase{I}nstitute.
\newblock {Historic Supreme Court Decisions}.
\newblock \url{https://www.law.cornell.edu/}, 2015.
\newblock Accessed: 2015-10-30.

\bibitem{lorenz2011emergence}
Dirk~M Lorenz, Alice Jeng, and Michael~W Deem.
\newblock The emergence of modularity in biological systems.
\newblock {\em {Physics of Life Reviews}}, 8(2):129--160, 2011.

\bibitem{ma2003connectivity}
Hong-Wu Ma and An-Ping Zeng.
\newblock The connectivity structure, giant strong component and centrality of
  metabolic networks.
\newblock {\em Bioinformatics}, 19(11):1423--1430, 2003.

\bibitem{mengistu2015evolutionary}
Henok Mengistu, Joost Huizinga, Jean-Baptiste Mouret, and Jeff Clune.
\newblock The evolutionary origins of hierarchy.
\newblock {\em arXiv preprint arXiv:1505.06353}, 2015.

\bibitem{meunier2010modular}
David Meunier, Renaud Lambiotte, and Edward~T Bullmore.
\newblock Modular and hierarchically modular organization of brain networks.
\newblock {\em {Frontiers in Neuroscience}}, 4:200, 2010.

\bibitem{mihm2010hierarchical}
J{\"u}rgen Mihm, Christoph~H Loch, Dennis Wilkinson, and Bernardo~A Huberman.
\newblock Hierarchical structure and search in complex organizations.
\newblock {\em {Management Science}}, 56(5):831--848, 2010.

\bibitem{myers2003software}
Christopher~R Myers.
\newblock Software systems as complex networks: Structure, function, and
  evolvability of software collaboration graphs.
\newblock {\em {Physical Review E}}, 68(4):046116, 2003.

\bibitem{nemhauser1978analysis}
George~L Nemhauser, Laurence~A Wolsey, and Marshall~L Fisher.
\newblock An analysis of approximations for maximizing submodular set
  functions—i.
\newblock {\em {Mathematical Programming}}, 14(1):265--294, 1978.

\bibitem{newman2010networks}
Mark Newman.
\newblock {\em {Networks: An Introduction}}.
\newblock Oxford University Press, 2010.

\bibitem{newman2002email}
Mark~EJ Newman, Stephanie Forrest, and Justin Balthrop.
\newblock Email networks and the spread of computer viruses.
\newblock {\em {Physical Review E}}, 66(3):035101, 2002.

\bibitem{oda2006comprehensive}
Kanae Oda and Hiroaki Kitano.
\newblock A comprehensive map of the toll-like receptor signaling network.
\newblock {\em {Molecular Systems Biology}}, 2(1), 2006.

\bibitem{palsson2015systems}
Bernhard~O Palsson.
\newblock {\em {Systems Biology}}.
\newblock Cambridge university press, 2015.

\bibitem{parnas1984modular}
David~Lorge Parnas, Paul~C Clements, and David~M Weiss.
\newblock The modular structure of complex systems.
\newblock In {\em {Proceedings of the 7th International Conference on Software
  Engineering}}, pages 408--417. IEEE Press, 1984.

\bibitem{quint2012transcriptomic}
Marcel Quint, Hajk-Georg Drost, Alexander Gabel, Kristian~Karsten Ullrich,
  Markus B{\"o}nn, and Ivo Grosse.
\newblock A transcriptomic hourglass in plant embryogenesis.
\newblock {\em Nature}, 490(7418):98--101, 2012.

\bibitem{quiroga2005invariant}
R~Quian Quiroga, Leila Reddy, Gabriel Kreiman, Christof Koch, and Itzhak Fried.
\newblock Invariant visual representation by single neurons in the human brain.
\newblock {\em Nature}, 435(7045):1102--1107, 2005.

\bibitem{ravasz2003hierarchical}
Erzs{\'e}bet Ravasz and Albert-L{\'a}szl{\'o} Barab{\'a}si.
\newblock Hierarchical organization in complex networks.
\newblock {\em {Physical Review E}}, 67(2):026112, 2003.

\bibitem{ravasz2002hierarchical}
Erzs{\'e}bet Ravasz, Anna~Lisa Somera, Dale~A Mongru, Zolt{\'a}n~N Oltvai, and
  A-L Barab{\'a}si.
\newblock Hierarchical organization of modularity in metabolic networks.
\newblock {\em Science}, 297(5586):1551--1555, 2002.

\bibitem{ravindra1993network}
K~Ahuja Ravindra, Thomas~L Magnanti, and James~B Orlin.
\newblock {\em {Network Flows: Theory, Algorithms, and Applications}}.
\newblock Prentice Hall Englewood Cliffs, 1993.

\bibitem{rexford2010future}
Jennifer Rexford and Constantine Dovrolis.
\newblock Future internet architecture: clean-slate versus evolutionary
  research.
\newblock {\em {Communications of the ACM}}, 53(9):36--40, 2010.

\bibitem{riesenhuber1999hierarchical}
Maximilian Riesenhuber and Tomaso Poggio.
\newblock Hierarchical models of object recognition in cortex.
\newblock {\em {Nature Neuroscience}}, 2(11):1019--1025, 1999.

\bibitem{rombach2014core}
M~Puck Rombach, Mason~A Porter, James~H Fowler, and Peter~J Mucha.
\newblock Core-periphery structure in networks.
\newblock {\em {SIAM Journal on Applied mathematics}}, 74(1):167--190, 2014.

\bibitem{saito2007large}
Hiroo Saito, Masashi Toyoda, Masaru Kitsuregawa, and Kazuyuki Aihara.
\newblock A large-scale study of link spam detection by graph algorithms.
\newblock In {\em {Proceedings of the 3rd International Workshop on Adversarial
  Information Retrieval on the Web}}, pages 45--48. ACM, 2007.

\bibitem{sales2007extracting}
Marta Sales-Pardo, Roger Guimera, Andr{\'e}~A Moreira, and Lu{\'\i}s A~Nunes
  Amaral.
\newblock Extracting the hierarchical organization of complex systems.
\newblock {\em {Proceedings of the National Academy of Sciences}},
  104(39):15224--15229, 2007.

\bibitem{schilling2000toward}
Melissa~A Schilling.
\newblock Toward a general modular systems theory and its application to
  interfirm product modularity.
\newblock {\em {Academy of Management Review}}, 25(2):312--334, 2000.

\bibitem{simon1991architecture}
Herbert~A Simon.
\newblock {\em {The Architecture of Complexity}}.
\newblock Springer, 1991.

\bibitem{siyari2016lexis}
Payam Siyari, Bistra Dilkina, and Constantine Dovrolis.
\newblock Lexis: An optimization framework for discovering the hierarchical
  structure of sequential data.
\newblock In {\em Proceedings of the 22Nd ACM SIGKDD International Conference
  on Knowledge Discovery and Data Mining}, KDD '16, pages 1185--1194, New York,
  NY, USA, 2016. ACM.

\bibitem{smolke2009metabolic}
Christina Smolke.
\newblock {\em {The Metabolic Pathway Engineering Handbook: Tools and
  Applications}}, volume~2.
\newblock CRC press, 2009.

\bibitem{stelling2004robustness}
J{\"o}rg Stelling, Uwe Sauer, Zoltan Szallasi, Francis~J Doyle, and John Doyle.
\newblock Robustness of cellular functions.
\newblock {\em Cell}, 118(6):675--685, 2004.

\bibitem{supper2009bowtiebuilder}
Jochen Supper, Luc{\'\i}a Spangenberg, Hannes Planatscher, Andreas Dr{\"a}ger,
  Adrian Schr{\"o}der, and Andreas Zell.
\newblock Bowtiebuilder: modeling signal transduction pathways.
\newblock {\em {BMC Systems Biology}}, 3(1):1, 2009.

\bibitem{swaminathan1998modeling}
Jayashankar~M Swaminathan, Stephen~F Smith, and Norman~M Sadeh.
\newblock Modeling supply chain dynamics: A multiagent approach*.
\newblock {\em {Decision Sciences}}, 29(3):607--632, 1998.

\bibitem{tanaka2005highly}
R~Tanaka, M~Csete, and J~Doyle.
\newblock Highly optimised global organisation of metabolic networks.
\newblock {\em {IEE Proceedings-Systems Biology}}, 152(4):179--184, 2005.

\bibitem{tarjan1972depth}
Robert Tarjan.
\newblock Depth-first search and linear graph algorithms.
\newblock {\em {SIAM Journal on Computing}}, 1(2):146--160, 1972.

\bibitem{valverde2007self}
Sergi Valverde and Ricard~V Sol{\'e}.
\newblock Self-organization versus hierarchy in open-source social networks.
\newblock {\em {Physical Review E}}, 76(4):046118, 2007.

\bibitem{vitali2011network}
Stefania Vitali, James~B Glattfelder, and Stefano Battiston.
\newblock The network of global corporate control.
\newblock {\em {PloS One}}, 6(10):e25995, 2011.

\bibitem{wagner2007road}
G{\"u}nter~P Wagner, Mihaela Pavlicev, and James~M Cheverud.
\newblock The road to modularity.
\newblock {\em {Nature Reviews Genetics}}, 8(12):921--931, 2007.

\bibitem{yan2010comparing}
Koon-Kiu Yan, Gang Fang, Nitin Bhardwaj, Roger~P Alexander, and Mark Gerstein.
\newblock Comparing genomes to computer operating systems in terms of the
  topology and evolution of their regulatory control networks.
\newblock {\em {Proceedings of the National Academy of Sciences}},
  107(20):9186--9191, 2010.

\bibitem{yu2006genomic}
Haiyuan Yu and Mark Gerstein.
\newblock Genomic analysis of the hierarchical structure of regulatory
  networks.
\newblock {\em {Proceedings of the National Academy of Sciences}},
  103(40):14724--14731, 2006.

\bibitem{yu2007importance}
Haiyuan Yu, Philip~M Kim, Emmett Sprecher, Valery Trifonov, and Mark Gerstein.
\newblock The importance of bottlenecks in protein networks: correlation with
  gene essentiality and expression dynamics.
\newblock {\em {PLoS Comput Biol}}, 3(4):e59, 2007.

\bibitem{zhao2006hierarchical}
Jing Zhao, Hong Yu, Jian-Hua Luo, Zhi-Wei Cao, and Yi-Xue Li.
\newblock Hierarchical modularity of nested bow-ties in metabolic networks.
\newblock {\em {BMC Bioinformatics}}, 7(1):386, 2006.

\end{thebibliography}

\newpage
\section*{Appendix -- Supplementary material}

\clearpage
\subsection*{Notation}

\begin{table}[!ht]
\scriptsize
\centering
\captionsetup{font=footnotesize}
\begin{tabular}{p{0.20\textwidth}p{0.8\textwidth}} 
\midrule
Symbol & Description\\
\midrule
$X$ & the cardinality of a set ${\bf X}$\\ 
${\bf G_0}$ & the original directed network (may include cycles)\\
${\bf G}$ & dependency network (directed and acyclic, by construction) \\
${\bf V}$ & set of vertices\\
${\bf E}$ & set of edges\\
${\bf S}$ & set of sources\\
${\bf T}$ & set of targets\\
${\bf M}$ & set of intermediates\\
${\bf I}(v)$ & set of vertices with edges to $v$ -- inputs of $v$\\
${\bf O}(v)$ & set of vertices with edges from $v$ -- outputs of $v$\\
$d_{in}(v)$ & in-degree of $v$\\
$d_{out}(v)$ & out-degree of $v$\\
$p(s,t)$ & a path from a source $s$ to a target $t$ -- an ST-path\\
$P(v)$ & path centrality of $v$\\
$P_S(v)$ & number of paths from sources to $v$ (complexity of $v$)\\
$P_T(v)$ & number of paths from $v$ to targets (generality of $v$)\\
${\bf P}$ & set of all ST-paths\\
${\bf P_R}$ & set of ST-paths that traverse a set ${\bf R}$ of vertices\\
$\delta_{\bf R}$ & path coverage of ${\bf R} (=P_R/P)$\\
${\bf \hat{R}}_k$ & set of k vertices with maximum path coverage\\
$\hat{\delta}_k$ & path coverage of ${\bf \hat{R}}_k$\\
$\tau$ & path coverage threshold\\
${\bf C}(\tau)$ & a core (there may be more than one) for a given path coverage threshold $\tau$\\
$\delta_{{\bf C}(\tau)}$ & the path coverage of ${\bf C}(\tau)$ (may be more than $\tau$)\\
${\bf G_f}$ & the flat network that corresponds to ${\bf G}$\\
${\bf C_f}(\tau)$ & the core of the flat network ${\bf G_f}$ for the threshold $\tau$\\
$H(\tau)$ & H-score for the threshold $\tau$\\
$U_{\bf C}$ & core vertex coverage of core ${\bf C}$\\
${\bf V_{ST}}$ & set of vertices in at least one ST-path\\
$\phi_{\bf C}(v)$ & indicator variable that vertex $v \in {\bf V_{ST}}$  is reachable from, or can reach, at least one
vertex in core ${\bf C}$\\
$L(v)$ & location of vertex $v$\\
$L_{\bf C}$ & average location of core ${\bf C}$\\
$\delta_{\bf C}(v)$ & incremental path coverage when vertex $v$ is added in core ${\bf C}$ (also referred to as
the ``weight'' of core vertex $v$)\\
$\alpha$ & reuse preference exponent\\
$d_{in}$ & average in-degree\\
$\beta$ & parameter of edge-copying model\\
\bottomrule
\end{tabular}
\caption{List of symbols.}
\label{tab:list-of-symbols}
\end{table}

\clearpage
\subsection*{Submodularity of the {\bf C$^3$MC} objective function}
\begin{lemma}
The objective function of the {\bf C$^3$MC} problem is submodular, i.e.,
\begin{equation}
\delta_{{\bf X}\cup\{v\}} - \delta_{\bf X} \geq \delta_{{\bf Y}\cup\{v\}} - \delta_{\bf Y} \label{eq:submodular}
\end{equation}
for any ${\bf X},{\bf Y}$ such that ${\bf X}\subseteq {\bf Y}\subseteq {\bf V}$ and for any vertex $v$ in ${\bf V}$.
\end{lemma}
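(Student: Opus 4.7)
The plan is to recognize the path coverage function as an instance of a weighted set-cover (union-of-sets) function, for which submodularity is classical. Since the denominator $P$ in $\delta_{\bf R} = P_{\bf R}/P$ is a constant independent of ${\bf R}$, it suffices to establish the inequality for the numerator $P_{\bf R}$, i.e., for the number of ST-paths that traverse at least one vertex of ${\bf R}$. Dividing through by $P>0$ at the end yields \eqref{eq:submodular} for $\delta$.

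To make this concrete, for each vertex $v \in {\bf V}$ I would introduce the set
\[
{\bf A}_v = \{\, p \in {\bf P} : p \text{ traverses } v \,\},
\]
so that by definition ${\bf P}_{\bf R} = \bigcup_{v \in {\bf R}} {\bf A}_v$ and hence $P_{\bf R} = \bigl|\bigcup_{v \in {\bf R}} {\bf A}_v\bigr|$. The marginal gain from adding $v$ to any set ${\bf R}$ is then
\[
P_{{\bf R}\cup\{v\}} - P_{\bf R} \;=\; \Bigl| {\bf A}_v \setminus \bigcup_{u \in {\bf R}} {\bf A}_u \Bigr|,
\]
because the ST-paths newly covered by adding $v$ are precisely those in ${\bf A}_v$ that were not already covered by some $u \in {\bf R}$.

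The key monotonicity step is then immediate: if ${\bf X} \subseteq {\bf Y}$, then $\bigcup_{u \in {\bf X}} {\bf A}_u \subseteq \bigcup_{u \in {\bf Y}} {\bf A}_u$, so
\[
{\bf A}_v \setminus \bigcup_{u \in {\bf Y}} {\bf A}_u \;\subseteq\; {\bf A}_v \setminus \bigcup_{u \in {\bf X}} {\bf A}_u.
\]
Taking cardinalities preserves the inclusion, which gives $P_{{\bf X}\cup\{v\}} - P_{\bf X} \geq P_{{\bf Y}\cup\{v\}} - P_{\bf Y}$. Dividing both sides by $P$ recovers \eqref{eq:submodular}.

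There is essentially no technical obstacle; the argument is the standard proof that a union-of-sets coverage function is submodular. The only thing worth double-checking is that the ST-paths are a well-defined finite collection (guaranteed because ${\bf G}$ is a DAG) and that membership of a path in ${\bf A}_v$ depends only on $v$ and the path itself, not on ${\bf R}$. Once these are verified, the set-theoretic inclusion does all the work.
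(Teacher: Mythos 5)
Your proof is correct and is essentially the same argument as the paper's: both reduce to the observation that any ST-path newly covered by adding $v$ to ${\bf Y}$ is also newly covered by adding $v$ to ${\bf X}$, since $\bigcup_{u\in{\bf X}}{\bf A}_u \subseteq \bigcup_{u\in{\bf Y}}{\bf A}_u$. The paper merely phrases this as an explicit three-way case analysis on the paths through $v$, whereas you package it as the classical set-difference inclusion for coverage functions; the content is identical.
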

\begin{proof}
The function $\delta_{\bf R}$ is non-negative and non-decreasing.
\begin{itemize}
\item{Case 1: Consider all ST-paths that traverse $v$ but not any vertex in ${\bf Y}$. These paths
do not traverse any vertex in ${\bf X}$ either. So the increase in the coverage of ${\bf X}$ and ${\bf Y}$
will be the same when we add $v$ in both sets.}
\item{Case 2: Consider all ST-paths that traverse $v$ as well as one or more vertices of ${\bf Y}$
but not any vertex of ${\bf X}$. Such ST-paths increase the coverage of only ${\bf X}\cup{v}$.}
\item{Case 3: Consider all ST-paths that traverse $v$ as well as one or more vertices of ${\bf X}$.
These paths are already included in the coverage of both ${\bf X}$ and ${\bf Y}$, and so they will not cause any further
coverage increase by including $v$ in ${\bf X}$ and ${\bf Y}$.}
\end{itemize}
These three cases account for all ST-paths traversing $v$ and we have shown that the submodularity
condition is satisfied in all of them.
\end{proof}

\clearpage

\subsection*{Vertices at the waist of each dependency network}

\begin{table}[!ht]
\scriptsize
\centering
\captionsetup{font=footnotesize}
\begin{tabular}{p{0.2\textwidth}p{0.12\textwidth}p{0.12\textwidth}p{0.56\textwidth}} 
\midrule
		Name & $\frac{P(v)}{P}$ & $\delta_{C}(v)$ & Description\\
\midrule
		packet\_send & 0.50 & 0.50 & \textit{Wrapper function for formatting and sending TCP packet.}\\
		packet\_read\_seqnr & 0.37 & 0.20 & \textit{Function to return type of received packet.}\\
		do\_exec & 0.42 & 0.10 & \textit{Function responsible for spawning a sub-shell as part of session creation}\\ 
\bottomrule
\end{tabular}
\caption{The waist of the OpenSSH-v5.2 call-graph network.}
\label{tab:core-nodes-openssh}
\end{table}

\begin{table}[!ht]
\scriptsize
\centering
\captionsetup{font=footnotesize}
\begin{tabular}{p{0.28\textwidth}p{0.07\textwidth}p{0.07\textwidth}p{0.58\textwidth}} 
\midrule
\midrule
		Name & $\frac{P(v)}{P}$ & $\delta_{C}(v)$ & Description\\
\midrule
		SCC-1 & 0.60 & 0.60 & \textit{Methods from the decimal floating point library class.}\\
		Vector3D:init & 0.08 &  0.06 & \textit{Initializer for base class implementing vectors in a three-dimensional space.} \\
		DerivativeStructure:init & 0.10 & 0.04 & \textit{Initializer for base class that is the workhorse of differentiation library.}\\
		FastMath:abs & 0.04 & 0.03 & \textit{Faster math library's absolute value computing method.}\\
		EigenDecomposition:init & 0.10 & 0.02 & \textit{Initializer for the class handling eigen decomposition of a real matrix.}\\
	  BigFraction:init & 0.05 & 0.02 & \textit{Initializer for base class representing a rational number without any overflow.} \\
		MatrixUtils:createRealMatrix & 0.09 & 0.01 & \textit{Method to create and initialize a real-valued matrix from given data.}\\
		Line:init & 0.04 & 0.01 & \textit{Initializer for the three dimensional geometric line Java class.}\\
		IntervalsSet:iterator & 0.008 & 0.01 & \textit{Iterator for traversing a set of one dimensional geometric intervals.}\\
		
\bottomrule
\end{tabular}
\caption{The waist of the Apache-Math-v3.4 call-graph network. SCCs are listed in Table~\ref{tab:apache-math-scc}.}
\label{tab:core-nodes-java}
\end{table}

\begin{table}[!ht]
\scriptsize
\centering
\captionsetup{font=footnotesize}
\begin{tabular}{p{0.1\textwidth}p{0.9\textwidth}}
		\midrule
		SCC & Components\\
		\midrule
		SCC-1 & 
		\textit{DfpMath:splitPow, Dfp:lessThan, Dfp:align, DfpMath:split, Dfp:dotrap, Dfp:multiply, Dfp:divide, DfpMath:log, Dfp:multiplyFast, DfpMath:logInternal, Dfp:negate, Dfp:add, Dfp:remainder, Dfp:init, Dfp:power10K, Dfp:trunc, Dfp:unequal, DfpMath:splitMult, Dfp:subtract, DfpMath:exp, Dfp:floor, Dfp:newInstance, DfpField:newDfp, Dfp:toDouble, Dfp:rint, Dfp:greaterThan, Dfp:round, DfpMath:pow, DfpMath:expInternal, Dfp:intValue, Dfp:copysign} \\
\bottomrule
\end{tabular}
\caption{SCCs in the core of the Apache-Math-v3.4 call-graph network.}
\label{tab:apache-math-scc}
\end{table}

\begin{table}[!ht]
\scriptsize
\centering
\captionsetup{font=footnotesize}
\begin{tabular}{p{0.22\textwidth}p{0.09\textwidth}p{0.09\textwidth}p{0.60\textwidth}} 
\midrule
\midrule
		Name & $\frac{P(v)}{P}$ & $\delta_{C}(v)$ & Description\\
\midrule
		SCC-1 & 0.60 & 0.60 & \textit{Contains the metabolic precursors: Pyruvate, PhosphenolPyruvate, Oxalocetate.}\\
		Arachidonate & 0.08 & 0.09 & \textit{Essential for enzyme synthesis.}\\ 
		Acetyl-CoA & 0.25 & 0.05 & \textit{A metabolic precursor.}\\
		SCC-2 & 0.25 & 0.04 & \textit{Contains the metabolic precursors: Glycerone Phosphate, Ribose-5-Phosphate, Glyceraldehyde 3 Phosphate.}\\
		Phosphatidate & 0.07 & 0.04 & \textit{Essential for Lipid synthesis.}\\ 
		SCC-3 & 0.02 & 0.02 & \textit{These compounds take part in Purine metabolism.}\\
		\{GQ1b, Glycan 9-11\} & 0.01 & 0.01 & \textit{These compounds take part in Ganglioside metabolism.}\\
\bottomrule
\end{tabular}
\caption{The waist of the Rat (\textit{R. Norvegius}) metabolic network. SCCs are listed in Table~\ref{tab:rat-metabolic-scc}.}
\label{tab:core-nodes-rat}
\end{table}

\begin{table}[!ht]
\scriptsize
\centering
\captionsetup{font=footnotesize}
\begin{tabular}{p{0.1\textwidth}p{0.9\textwidth}}
		\midrule
		SCC & Components\\
		\midrule
		SCC-1 & 
		\textit{Ammonia, Pyruvate, Oxalocetate, L-Alanine, L-Aspartate, Glutathione, Glycine, L-Arginine, L-Glutamine, L-Serine, Phosphoenolpyruvate, gamma-Glutamylcysteine, L-Argininosuccinate, Mercaptopyruvate, L-Cystathionine, Casbene, Carbomoyl Phosphate, Fumarate, Citrolline, Malate, L-Cysteine, L-Ornithine} \\
		SCC-2 & \textit{Glycerone Phosphate, Ribose 5-Phosphate, Glyceraldehyde 3-Phosphate, PRPP, D-Xylulose 5-Phosphate, beta-D-Fructose-6-phosphate, Sedoheptulose 7-phosphate, beta-D-Fructose 1,6-bisphosphate} \\
		SCC-3 & \textit{AMP, GDP, DNA, Guanine, Deoxyadenosine, dATP, IMP, dGTP, XMP, Xanthine, Guanosine, dADP, dAMP, dGDP, Inosine, Adenine, Hypoxanthine} \\
\bottomrule
\end{tabular}
\caption{SCCs in the core of the Rat (\textit{R. Norvegius}) metabolic network.}
\label{tab:rat-metabolic-scc}
\end{table}

\begin{table}[!ht]
\scriptsize
\centering
\captionsetup{font=footnotesize}
\begin{tabular}{p{0.26\textwidth}p{0.09\textwidth}p{0.09\textwidth}p{0.56\textwidth}} 
\midrule
\midrule
		Name & $\frac{P(v)}{P}$ & $\delta_{C}(v)$ & Description\\
\midrule
		SCC-1 & 0.57 & 0.57 & \textit{Contains the metabolic precursors Pyruvate, PhosphenolPyruvate, Oxalocetate.}\\
		Arachidonate & 0.10 & 0.09 & \textit{Essential for enzyme synthesis.}\\ 
		Acetyl-CoA & 0.25 & 0.06 & \textit{A metabolic precursor.}\\
		Phosphatidate & 0.08 & 0.05 & \textit{Essential for lipid synthesis.}\\
		SCC-2 & 0.21 & 0.03 & \textit{Contains the metabolic precursors: Glycerone Phosphate, Ribose-5-Phosphate, Glyceraldehyde 3 Phosphate.}\\
		SCC-3 & 0.03 & 0.03 & \textit{These compounds take part in Purine metabolism.}\\
		Lc3Cer & 0.01 & 0.01 & \textit{Aids in biosynthesis of Glycolipids.}\\
		Malonyl-[acp] & 0.01 & 0.01 & \textit{A key compound for fatty acid synthesis.}\\
\bottomrule
\end{tabular}
\caption{The waist of the Monkey (\textit{M. Mulatta}) metabolic network. SCCs are listed in Table~\ref{tab:monkey-metabolic-scc}.}
\label{tab:core-nodes-monkey}
\end{table}

\begin{table}[!ht]
\scriptsize
\centering
\captionsetup{font=footnotesize}
\begin{tabular}{p{0.1\textwidth}p{0.9\textwidth}}
		\midrule
		SCC & Components\\
		\midrule
		SCC-1 & 
		\textit{Ammonia, Pyruvate, Oxalocetate, L-Alanine, L-Aspartate, Glutathione, Glycine, L-Arginine, L-Glutamine, L-Serine, Phosphoenolpyruvate, gamma-Glutamylcysteine, L-Argininosuccinate, Mercaptopyruvate, Cyc-Gly, Carbomoyl Phosphate, Fumarate, Citrolline, Malate, L-Cysteine, L-Ornithine} \\
		SCC-2 & \textit{Glycerone Phosphate, Ribose 5-Phosphate, Glyceraldehyde 3-Phosphate, PRPP, D-Xylulose 5-Phosphate, beta-D-Fructose-6-phosphate, Sedoheptulose 7-phosphate, beta-D-Fructose 1,6-bisphosphate} \\
		SCC-3 & \textit{AMP, GDP, DNA, Guanine, Deoxyadenosine, dATP, IMP, dGTP, XMP, Xanthine, Guanosine, dADP, dAMP, dGDP, Inosine, Adenine, Hypoxanthine, Adenosine, GMP, Adenylosuccinate, Xanthosine}\\
\bottomrule
\end{tabular}
\caption{SCCs in the core of the Monkey (\textit{M. Mulatta}) metabolic network.}
\label{tab:monkey-metabolic-scc}
\end{table}

\begin{table}[!ht]
\scriptsize
\centering
\captionsetup{font=footnotesize}
\begin{tabular}{p{0.25\textwidth}p{0.09\textwidth}p{0.09\textwidth}p{0.55\textwidth}} 
\midrule
\midrule
		Name & $\frac{P(v)}{P}$ & $\delta_{C}(v)$ & Description\\
\midrule
\multirow{2}{*}{\parbox{2.5cm}{Planned Parenthood v. Casey (1992)}} & 0.69 & 0.69 & \textit{A ``landmark'' decision on abortion rights.} \\
& & \\
Roe v. Wade (1973) & 0.65 & 0.20 & \textit{A ``landmark'' decision in favor of abortion rights with certain restrictions.}\\
Bigelow v. Virginia (1975) & 0.38 & 0.05 & \textit{A ``landmark'' decision on protecting First Amendment right on advertising, where the advertisement in question was on abortion services.}\\
Harris v. McRae (1980) & 0.55 & 0.03 & \textit{A ``landmark'' decision regarding federal funds restriction on abortion.}\\
\bottomrule
\end{tabular}
\caption{The waist of the SCotUS citation network on Abortion cases. Cases labeled as ``landmarks'' are listed as Historic by the Legal Information Institute at Cornell University.}
\label{tab:core-nodes-abortion}
\end{table}

\begin{table}[!ht]
\scriptsize
\centering
\captionsetup{font=footnotesize}
\begin{tabular}{p{0.27\textwidth}p{0.09\textwidth}p{0.09\textwidth}p{0.55\textwidth}} 
\midrule
\midrule
		Name & $\frac{P(v)}{P}$ & $\delta_{C}(v)$ & Description\\
\midrule
Goldberg v. Kelly (1970) & 0.42 & 0.42 & \textit{A ``landmark'' decision that established the full evidential hearing requirement before termination of welfare benefits.}\\
\multirow{2}{*}{\parbox{3cm}{Allied Structural Steel Co. v. Spannaus (1978)}} & 0.22 & 0.22 &  \textit{A ``landmark'' decision that reinstated pension rights for certain Allied Steel employees.}\\
\multirow{2}{*}{\parbox{3cm}{L.A. Dept. of Water \& Power v. Manhart (1978)}} & 0.16 & 0.11 &  \textit{A ``landmark'' decision that stated discrimination in pension contribution requirement based on sex is unlawful.}\\
\multirow{2}{*}{\parbox{3cm}{US Railroad Retirement Bd. v. Fritz (1980)}} & 0.38 & 0.09 &  \textit{A ``landmark'' decision that reinstated pension rights for certain US Railroad employees.}\\
Johnson v. Robison (1974) & 0.21 & 0.03 & \textit{A decision that retained certain benefits for combat veterans.}\\ 
\multirow{2}{*}{\parbox{3cm}{Hishon v. King \& Spalding (1984)}} & 0.08 & 0.02 &  \textit{A decision regarding benefit discrimination based on sex.}\\
& & \\
Helvering v. Davis (1937) & 0.06 & 0.02 & \textit{A ``landmark'' decision defending the constitutional validity of the Social Security Act.}\\
\multirow{2}{*}{\parbox{3cm}{Nollan v. California Coastal Com. (1987)}} & 0.07 & 0.01 & \textit{A decision concerning 5th and 14th amendment for property protection.} \\
\multirow{2}{*}{\parbox{3cm}{United States v. Kokinda (1990)}} & 0.11 & 0.01 & \textit{A decision involving first amendment rights for free speech.} \\
& & \\
\multirow{2}{*}{\parbox{3cm}{Pension Benefit Guar. Corp. v. LTV Corp. (1990)}} & 0.17 & 0.01 & \textit{A decision involving insurance of pension benefits.} \\
& & \\
\multirow{2}{*}{\parbox{3cm}{Plaut v. Spendthrift Farm (1995)}} & 0.11 & 0.01 & \textit{A decision concerning separation of power between legislation and judiciary.} \\
\bottomrule
\end{tabular}
\caption{The waist of the SCotUS citation network on Pension cases. Cases labeled as ``landmarks'' are listed as Historic by the Legal Information Institute at Cornell University.}
\label{tab:core-nodes-pension}
\end{table}

\clearpage

\begin{figure}
    \centering
		\subfigure[OpenSSH call-graph]
    {
        \includegraphics[scale=0.4]{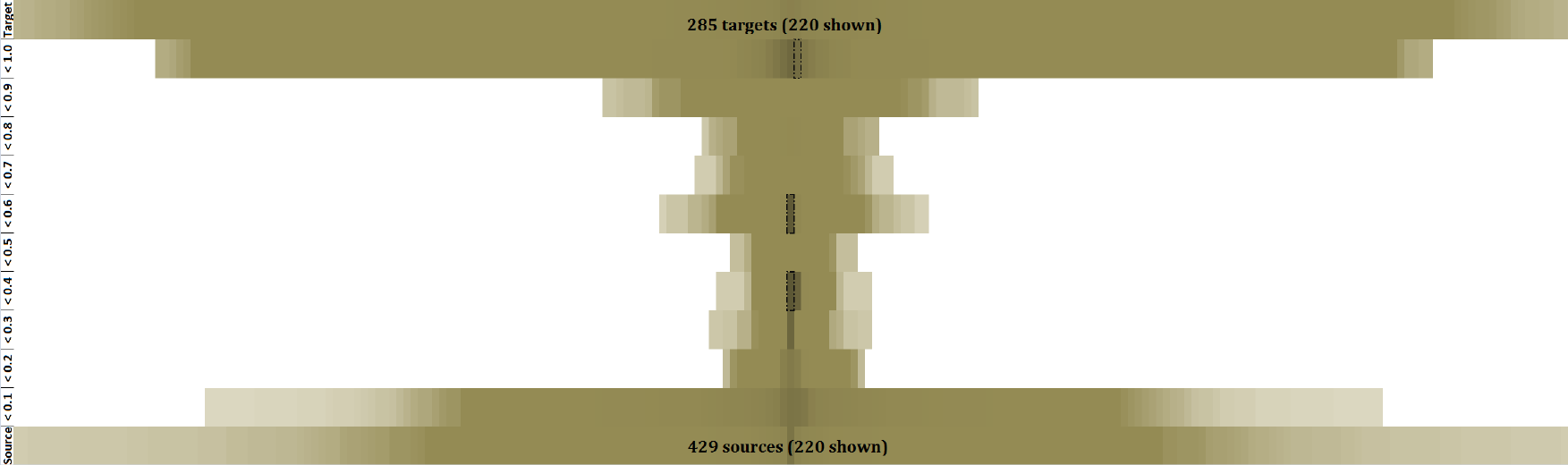}
    }
		\hspace{1mm}
		\subfigure[Apache math call-graph]
    {
        \includegraphics[scale=0.4]{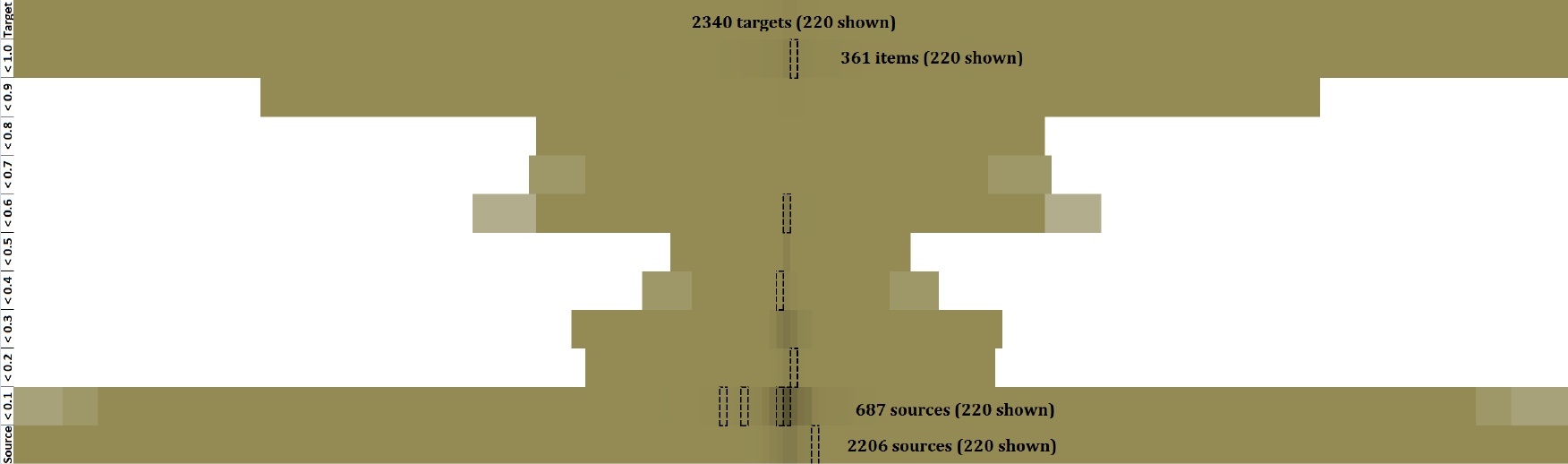}
    }
		\hspace{1mm}
				\subfigure[Monkey metabolic network]
    {
        \includegraphics[scale=0.4]{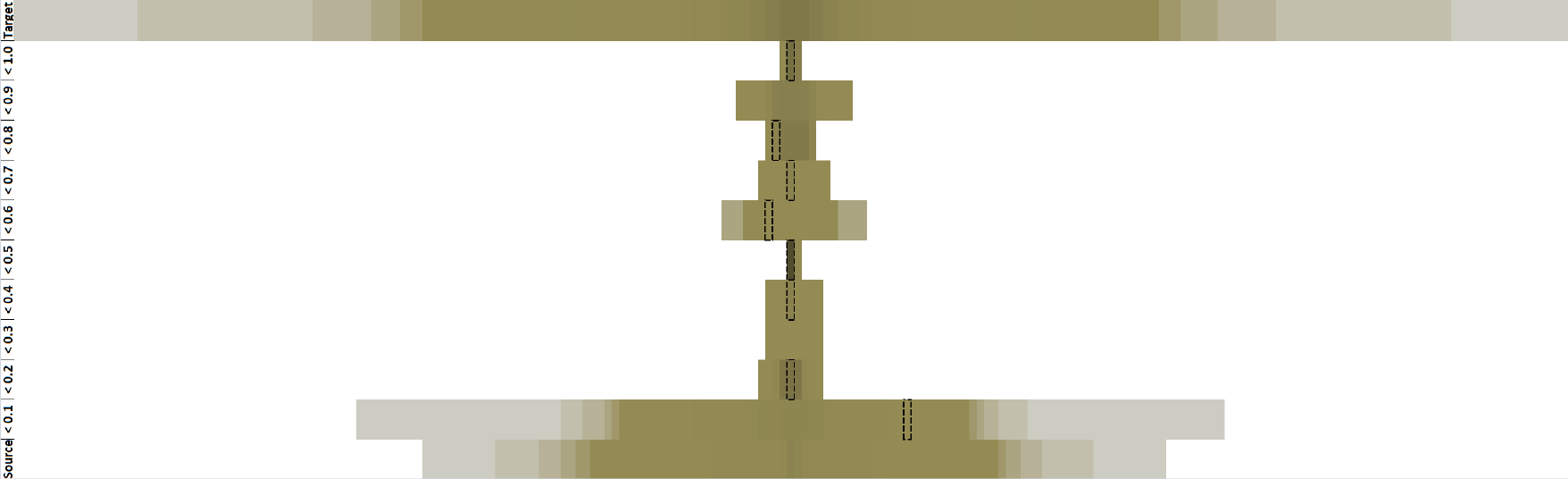}
    }
		\hspace{1mm}
		\subfigure[Abortion case citation network]
    {
        \includegraphics[scale=0.4]{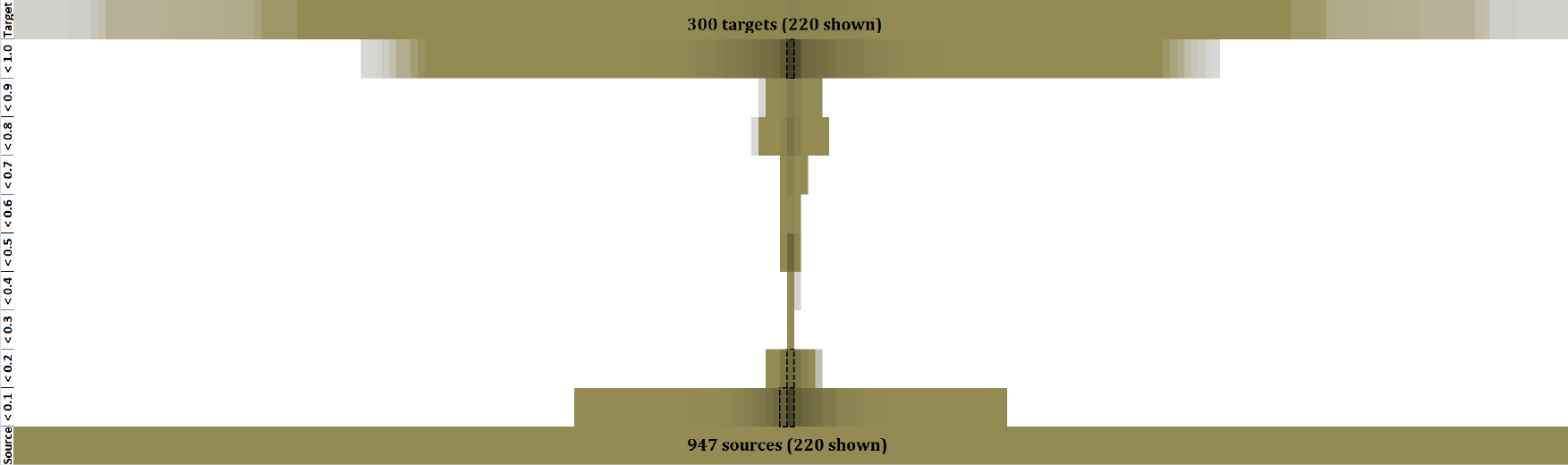}
    }
		\hspace{1mm}
		\subfigure[Pension case citation network]
    {
        \includegraphics[scale=0.4]{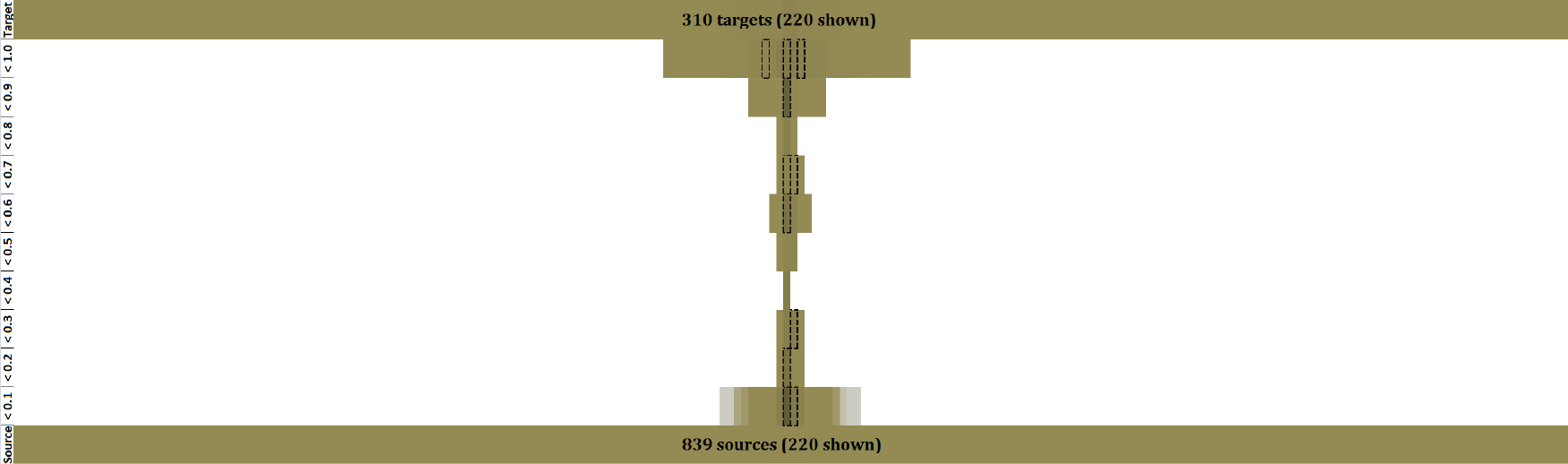}
    }
		\hspace{1mm}
    \caption{Visualizations of the location and path centrality for each network. Please refer
to the caption of Figure~\ref{fig:core-locations-heatmap} for a description of this visualization.}
  \label{fig:heatmap-othernets}
\end{figure}

\clearpage
\begin{figure}
        \centering
				\subfigure[]
				{
					\includegraphics[scale=0.405]{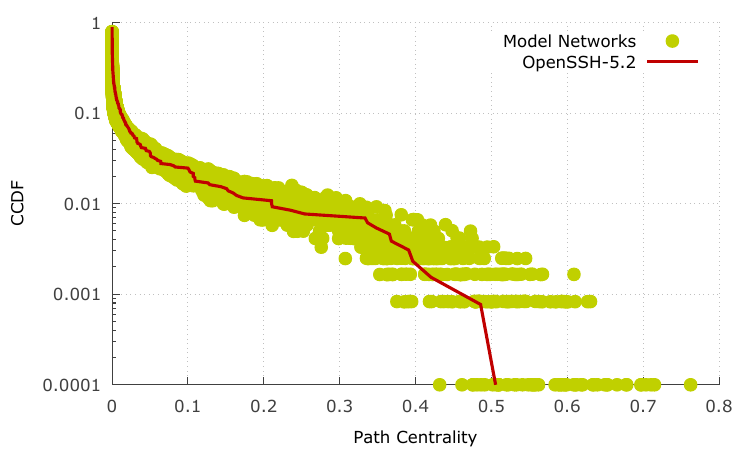}
				}
				\hspace{2mm}
				\subfigure[]
				{
					\includegraphics[scale=0.405]{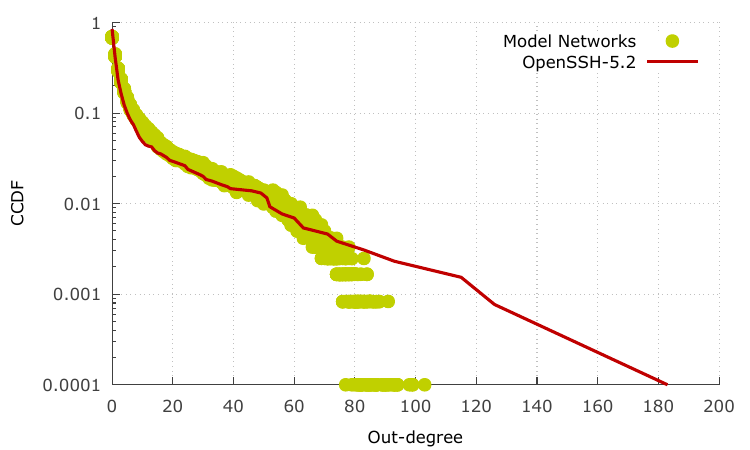}
				}
				
				\subfigure[]
				{
					\includegraphics[scale=0.405]{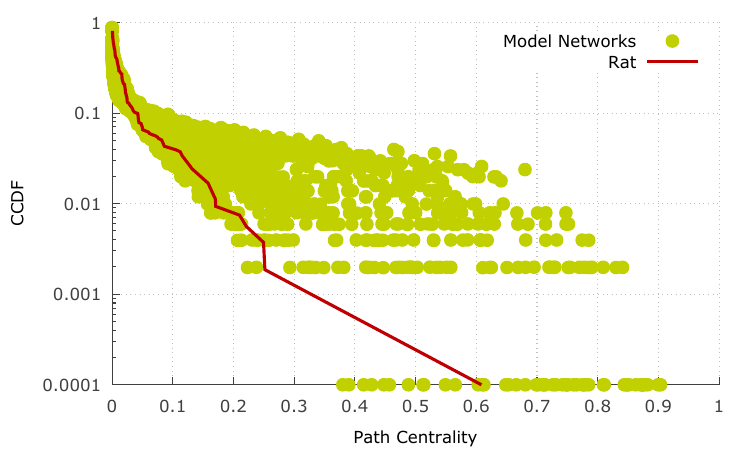}
				}
				\hspace{2mm}
				\subfigure[]
				{
					\includegraphics[scale=0.405]{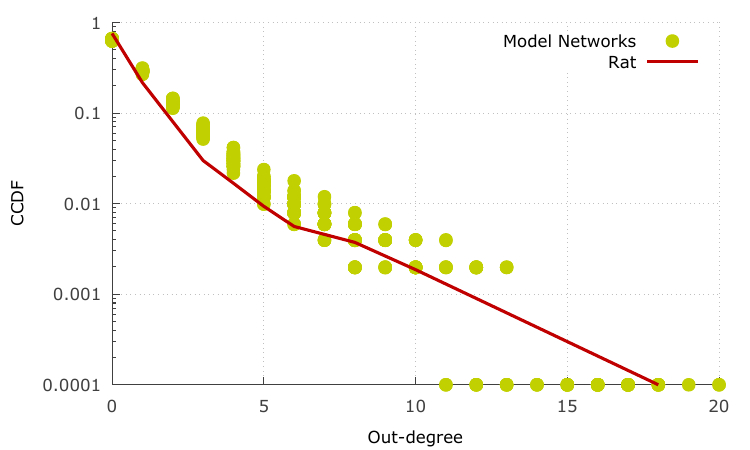}
				}
				
				\subfigure[]
				{
					\includegraphics[scale=0.405]{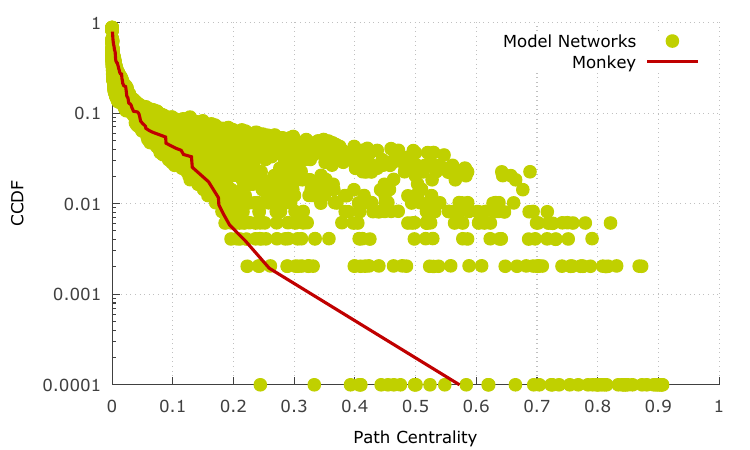}
				}
				\hspace{2mm}
				\subfigure[]
				{
					\includegraphics[scale=0.405]{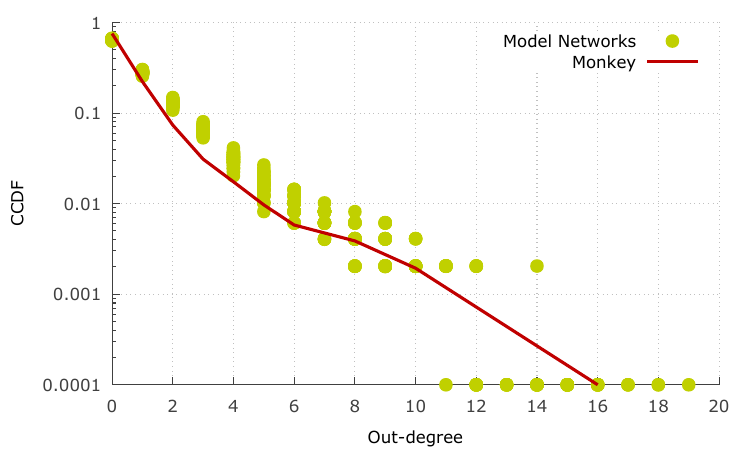}
				}

				\subfigure[]
				{
					\includegraphics[scale=0.405]{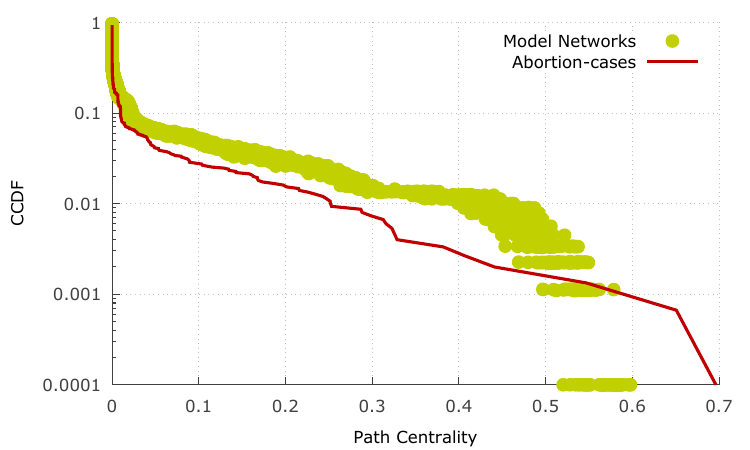}
				}
				\hspace{2mm}
				\subfigure[]
				{
					\includegraphics[scale=0.405]{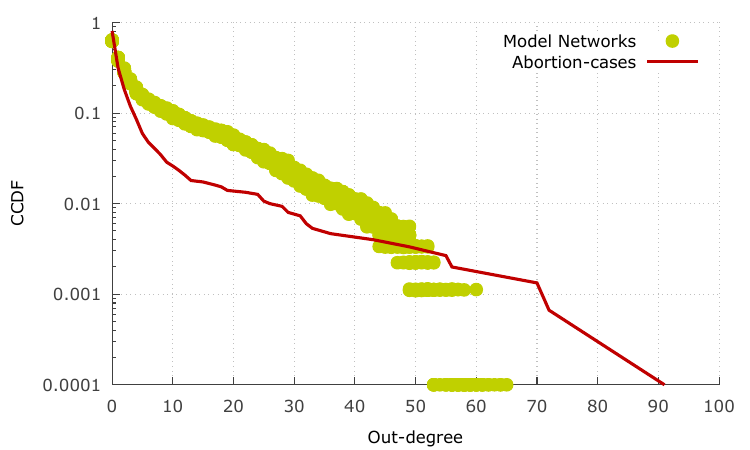}
				}
				
				\subfigure[]
				{
					\includegraphics[scale=0.40]{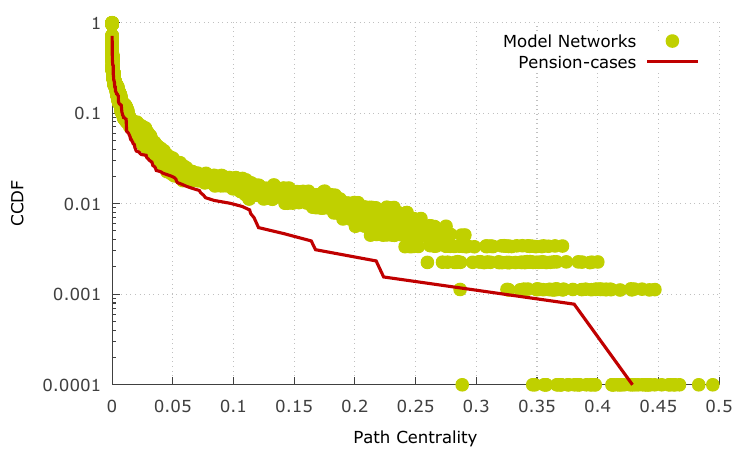}
				}
				\hspace{2mm}
				\subfigure[]
				{
					\includegraphics[scale=0.41]{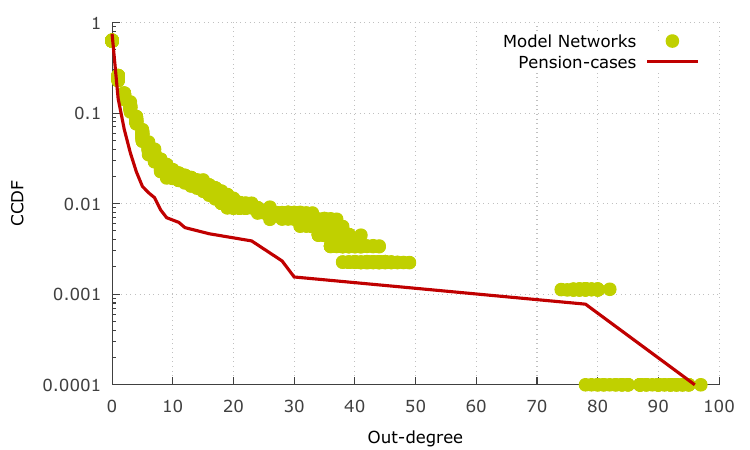}
				}     

        \caption{Comparison of path centrality and out-degree distributions between some real dependency networks and the corresponding synthetic networks generated by the RP-model.}
    \label{fig:real-model-distribution}
\end{figure}
\clearpage

\end{document}